\newcommand{\C}{{\mathbb{C}}}
\newcommand{\N}{{\mathbb{N}}}
\newcommand{\Z}{{\mathbb{Z}}}
\renewcommand{\d}{\mathrm{d}}
\renewcommand{\Re}{\mathop{\mathrm{Re}}}
\newcommand{\norm}[1]{\|{#1}\|}
\newcommand{\range}[1]{[{#1}]}
\newcommand{\rangez}[1]{[{#1}]_0}
\DeclareMathOperator{\poly}{poly}
\DeclareMathOperator{\diag}{diag}
\newtheorem{theorem}{Theorem}
\newtheorem{lemma}{Lemma}
\newtheorem{corollary}{Corollary}
\newtheorem{problem}{Problem}
\numberwithin{equation}{section}
\newcommand{\eq}[1]{(\ref{eq:#1})}
\renewcommand{\sec}[1]{\hyperref[sec:#1]{Section~\ref*{sec:#1}}}
\newcommand{\app}[1]{\hyperref[app:#1]{Appendix~\ref*{app:#1}}}
\newcommand{\thm}[1]{\hyperref[thm:#1]{Theorem~\ref*{thm:#1}}}
\newcommand{\lem}[1]{\hyperref[lem:#1]{Lemma~\ref*{lem:#1}}}
\newcommand{\cor}[1]{\hyperref[cor:#1]{Corollary~\ref*{cor:#1}}}
\newcommand{\prb}[1]{\hyperref[prb:#1]{Problem~\ref*{prb:#1}}}
\begin{document}


\title{Quantum spectral methods for differential equations}

\author{Andrew M.\ Childs$^{1,2,3}$ \qquad Jin-Peng Liu$^{1,4}$\\[2pt]
\small $^{1}$Joint Center for Quantum Information and Computer Science, \\
\small $^{2}$Department of Computer Science,
\small $^{3}$Institute for Advanced Computer Studies, and \\
\small $^{4}$Applied Mathematics, Statistics, and Scientific Computation, \\
\small University of Maryland
}

\date{}
\maketitle


\begin{abstract}
Recently developed quantum algorithms address computational challenges in numerical analysis by performing linear algebra in Hilbert space. Such algorithms can produce a quantum state proportional to the solution of a $d$-dimensional system of linear equations or linear differential equations with complexity $\poly(\log d)$. While several of these algorithms approximate the solution to within $\epsilon$ with complexity $\poly(\log(1/\epsilon))$, no such algorithm was previously known for differential equations with time-dependent coefficients. Here we develop a quantum algorithm for linear ordinary differential equations based on so-called spectral methods, an alternative to finite difference methods that approximates the solution globally. Using this approach, we give a quantum algorithm for time-dependent initial and boundary value problems with complexity $\poly(\log d, \log(1/\epsilon))$.
\end{abstract}

\section{Introduction}
\label{sec:introduction}

Differential equations have extensive applications throughout mathematics, science, and engineering. Numerical methods for differential equations have been widely studied (see for example \cite{Kre98}), giving fast algorithms for solving them using classical computers.

Recent work has developed quantum algorithms with the potential to extract information about solutions of systems of differential equations even faster than is possible classically. This body of work grew from the quantum linear systems algorithm (QLSA) \cite{HHL09}, which produces a quantum state proportional to the solution of a sparse system of $d$ linear equations in time $\poly(\log d)$. Subsequent work improved the performance of that algorithm \cite{Amb12,CKS15} and applied it to develop similar quantum algorithms for differential equations.

To achieve this improvement, quantum algorithms must operate under different assumptions than those made for algorithms in classical numerical analysis. To represent the output using $\poly(\log d)$ qubits, the output is produced as a quantum state, not as an explicit description of a vector. Furthermore, to facilitate applying Hamiltonian simulation, these quantum algorithms use implicit access to the system of equations (say, through a matrix specified by a sparse Hamiltonian oracle and the ability to prepare quantum states encoding certain vectors). While these assumptions restrict the types of equations that can be addressed and the types of information that can be extracted from the final state, they nevertheless appear capable of producing useful information that cannot be efficiently computed classically.

In this paper, we focus on systems of first-order linear ordinary differential equations (ODEs). Such equations can be written in the form
\begin{equation}
\frac{\d{x(t)}}{\d{t}}=A(t)x(t)+f(t)
\label{eq:ode}
\end{equation}
where $t \in [0,T]$ for some $T>0$, the solution $x(t)\in\C^d$ is a $d$-dimensional vector, and the system is determined by a time-dependent matrix $A(t)\in\C^{d\times d}$ and a time-dependent inhomogeneity $f(t)\in\C^d$. Provided $A(t)$ and $f(t)$ are continuous functions of $t$, the initial value problem (i.e., the problem of determining $x(t)$ for a given initial condition $x(0)$) has a unique solution \cite{Apo63}.

The Hamiltonian simulation problem is simply the special case of the quantum ODE problem where $A$ is antihermitian and $f$ is zero. A substantial body of work has developed fast quantum algorithms for that special case  \cite{BAC07,PQSV11,BCC13,BCC15,BCK15,BHMT02,BN16,LC16,LC17,NB16}. Hamiltonian simulation underlies the QLSA \cite{HHL09,CKS15} which in turn gives algorithms for more general differential equations.

Berry presented the first efficient quantum algorithm for general linear ODEs \cite{Ber14}. His algorithm represents the system of differential equations as a system of linear equations using a linear multistep method and solves that system using the QLSA. This approach achieves complexity logarithmic in the dimension $d$ and, by using a high-order integrator, close to quadratic in the evolution time $T$. While this method could in principle be applied to handle time-dependent equations, the analysis of \cite{Ber14} only explicitly considers the time-independent case for simplicity.

Since it uses a finite difference approximation, the complexity of Berry's algorithm as a function of the solution error $\epsilon$ is $\poly(1/\epsilon)$ \cite{Ber14}. Reference \cite{BCOW17} improved this to $\poly(\log(1/\epsilon))$ by using a high-precision QLSA based on linear combinations of unitaries \cite{CKS15} to solve a linear system that encodes a truncated Taylor series. However, this approach assumes that $A(t)$ and $f(t)$ are time-independent so that the solution of the ODE can be written as an explicit series, and it is unclear how to generalize the algorithm to time-dependent ODEs.

While we focus here on extending the above line of work, several other approaches have been proposed for addressing differential equations with quantum computers. Reference \cite{LO08} used a quantum version of the Euler method to handle nonlinear ODEs with polynomial nonlinearities. This algorithm has complexity logarithmic in the dimension but exponential in the evolution time (as is inevitable for general nonlinear ODEs). Other work has developed quantum algorithms for partial differential equations (PDEs). Reference \cite{CJS13} described a quantum algorithm that applies the QLSA to implement a finite element method for Maxwell's equations. Reference \cite{CJO17} applied Hamiltonian simulation to a finite difference approximation of the wave equation. Most recently, reference \cite{AKWL18} presented a continuous-variable quantum algorithm for initial value problems with non-homogeneous linear PDEs.

Most of the aforementioned algorithms use a local approximation: they discretize the differential equations into small time intervals to obtain a system of linear equations or linear differential equations that can be solved by the QLSA or Hamiltonian simulation. For example, the central difference scheme approximates the time derivative at the point $x(t)$ as
\begin{equation}
\frac{\d{x(t)}}{\d{t}}=\frac{x(t+h)-x(t-h)}{2h}+O(h^2).
\end{equation}
High-order finite difference or finite element methods can reduce the error to $O(h^k)$, where $k-1$ is the order of the approximation. However, when solving an equation over the interval $[0,T]$, the number of iterations is $T/h = \Theta(\epsilon^{-1/k})$ for fixed $k$, giving a total complexity that is $\poly(1/\epsilon)$ even using high-precision methods for the QLSA or Hamiltonian simulation.

For ODEs with special structure, some prior results already show how to avoid a local approximation and thereby achieve complexity $\poly(\log(1/\epsilon))$. When $A(t)$ is anti-Hermitian and $f(t)=0$, we can directly apply Hamiltonian simulation \cite{BCK15}; if $A$ and $f$ are time-independent, then \cite{BCOW17} uses a Taylor series to achieve complexity $\poly(\log({1}/{\epsilon}))$. However, the case of general time-dependent linear ODEs had remained elusive.

In this paper, we use a nonlocal representation of the solution of a system of differential equations to give a new quantum algorithm with complexity $\poly(\log(1/\epsilon))$ even for time-dependent equations. Time-dependent linear differential equations describe a wide variety of systems in science and engineering. Examples include the wave equation and the Stokes equation (i.e., creeping flow) in fluid dynamics \cite{KK13}, the heat equation and the Boltzmann equation in thermodynamics \cite{Tha11,Har07}, the Poisson equation and Maxwell's equations in electromagnetism \cite{Jac12,Thi04}, and of course Schr{\"o}dinger's equation in quantum mechanics. Moreover, some nonlinear differential equations can be studied by linearizing them to produce time-dependent linear equations (e.g., the linearized advection equation in fluid dynamics \cite{Bir02}).

We focus our discussion on first-order linear ODEs. Higher-order ODEs can be transformed into first-order ODEs by standard methods. Also, by discretizing space, PDEs with both time and space dependence can be regarded as sparse linear systems of time-dependent ODEs. Thus we focus on an equation of the form \eq{ode} with initial condition
\begin{equation}
x(0)=\gamma
\end{equation}
for some specified $\gamma\in\C^d$. We assume that $A(t)$ is $s$-sparse (i.e., has at most $s$ nonzero entries in any row or column) for any $t\in[0,T]$. Furthermore, we assume that $A(t)$, $f(t)$, and $\gamma$ are provided by black-box subroutines (which serve as abstractions of efficient computations). In particular, following essentially the same model as in \cite{BCOW17} (see also Section 1.1 of \cite{CKS15}), suppose we have an oracle $O_A(t)$ that, for any $t \in [0,T]$ and any given row or column specified as input, computes the locations and values of the nonzero entries of $A(t)$ in that row or column. We also assume oracles $O_x$ and $O_f(t)$ that, for any $t \in [0,T]$, prepare normalized states $|\gamma\rangle$ and $|f(t)\rangle$ proportional to $\gamma$ and $f(t)$, and that also compute $\|\gamma\|$ and $\|f(t)\|$, respectively. Given such a description of the instance, the goal is to produce a quantum state $\epsilon$-close to $|x(T)\rangle$ (a normalized quantum state proportional to $x(T)$).

As mentioned above, our main contribution is to implement a method that uses a global approximation of the solution. We do this by developing a quantum version of so-called \emph{spectral methods}, a technique from classical numerical analysis that (approximately) represents the components of the solution $x(t)_i \approx \sum_j c_{ij} \phi_j(t)$ as linear combinations of basis functions $\phi_j(t)$ expressing the time dependence. Specifically, we implement a Chebyshev pseudospectral method \cite{BH02,Hos06} using the QLSA. This approach approximates the solution by a truncated Chebyshev series with undetermined coefficients and solves for those coefficients using a linear system that interpolates the differential equations. According to the convergence theory of spectral methods, the solution error decreases exponentially provided the solution is sufficiently smooth \cite{Ghe07,STW11}. We use the LCU-based QLSA to solve this linear system with high precision \cite{CKS15}. To analyze the algorithm, we upper bound the solution error and condition number of the linear system and lower bound the success probability of the final measurement. Overall, we show that the total complexity of this approach is $\poly(\log(1/\epsilon))$ for general time-dependent ODEs.

In addition to initial value problems, our approach can also address boundary value problems (BVPs). Given an oracle for preparing a state $\alpha|x(0)\rangle+\beta|x(T)\rangle$ expressing a general boundary condition, the goal of the quantum BVP is to produce a quantum state $\epsilon$-close to $|x(t)\rangle$ (a normalized state proportional to $x(t)$) for any desired $t \in [0,T]$. We also give a quantum algorithm for this problem with complexity $\poly(\log(1/\epsilon))$.

The remainder of this paper is organized as follows. \sec{spectral} introduces the spectral method and \sec{linear_system} shows how to encode it into a quantum linear system. Then \sec{error} analyzes the exponential decrease of the solution error, \sec{condition_number} bounds the condition number of the linear system, \sec{success_prob} lower bounds the success probability of the final measurement, and \sec{state preparation} describes how to prepare the initial quantum state. We combine these bounds in \sec{main} to establish the main result. We then extend the analysis for initial value problems to boundary value problems in \sec{BVP}. Finally, we conclude in \sec{discussion} with a discussion of the results and some open problems.

\section{Spectral method}
\label{sec:spectral}

Spectral methods provide a way of solving differential equations using global approximations \cite{Ghe07,STW11}. The main idea of the approach is as follows. First, express an approximate solution as a linear combination of certain basis functions with undetermined coefficients. Second, construct a system of linear equations that such an approximate solution should satisfy. Finally, solve the linear system to determine the coefficients of the linear combination.

Spectral methods offer a flexible approach that can be adapted to different settings by careful choice of the basis functions and the linear system. A Fourier series provides an appropriate basis for periodic problems, whereas Chebyshev polynomials can be applied more generally. The linear system can be specified using Gaussian quadrature (giving a \emph{spectral element method} or \emph{Tau method}), or one can simply interpolate the differential equations using quadrature nodes (giving a \emph{pseudo-spectral method}) \cite{STW11}. Since general linear ODEs are non-periodic, and interpolation facilitates constructing a straightforward linear system, we develop a quantum algorithm based on the Chebyshev pseudo-spectral method \cite{BH02,Hos06}.

In this approach, we consider a truncated Chebyshev approximation $x(t)$ of the exact solution $\hat x(t)$, namely
\begin{equation}
x_i(t)=\sum_{k=0}^nc_{i,k}T_k(t),\quad i\in\rangez{d}:=\{0,1,\ldots,d-1\}
\label{eq:cheby_expand}
\end{equation}
for any $n \in \Z^+$. (See \app{Chebyshev} for the definition of $T_k(t)$ and a discussion of its properties.) The coefficients $c_{i,k} \in \C$ for all $i\in\rangez{d}$ and $k\in\rangez{n+1}$ are determined by demanding that $x(t)$ satisfies the ODE and initial conditions at a set of \emph{interpolation nodes} $\{t_l\}_{l=0}^n$ (with $1=t_0>t_1>\cdots>t_n=-1$), where $x(t_0)$ and $x(t_n)$ are the initial and final states, respectively. In other words, we require
\begin{equation}
\frac{\d{x(t_l)}}{\d{t}}=A(t_l)x(t_l)+f(t_l), \quad \forall \, l\in\range{n+1},~ t\in[-1,1],
\label{eq:cheby_ode}
\end{equation}
and
\begin{equation}
x_i(t_0)=\gamma_i, \quad i\in\rangez{d}.
\label{eq:ode_init}
\end{equation}
We would like to be able to increase the accuracy of the approximation by increasing $n$, so that
\begin{equation}
\|\hat{x}(t)-x(t)\|\rightarrow0 \quad \text{as} \quad n\rightarrow\infty.
\end{equation}

There are many possible choices for the interpolation nodes. Here we use the Chebyshev-Gauss-Lobatto quadrature nodes, $t_l=\cos\frac{l\pi}{n}$ for $l \in \rangez{n+1}$, since these nodes achieve the highest convergence rate among all schemes with the same number of nodes \cite{DCS89,Kre98}. These nodes also have the convenient property that $T_k(t_l)=\cos\frac{kl\pi}{n}$, making it easy to compute the values $x_i(t_l)$.

To evaluate the condition \eq{cheby_ode}, it is convenient to define coefficients $c'_{i,k}$ for $i\in\rangez{d}$ and $k\in\rangez{n+1}$ such that
\begin{equation}
\frac{\d{x_i(t)}}{\d{t}}=\sum_{k=0}^n c'_{i,k}T_k(t).
\label{eq:cheby_deriv}
\end{equation}
We can use the differential property of Chebyshev polynomials,
\begin{align}
  2T_k(t)=\frac{T'_{k+1}(t)}{k+1}-\frac{T'_{k-1}(t)}{k-1},
\end{align}
to determine the transformation between $c_{i,k}$ and $c'_{i,k}$. As detailed in \app{Chebyshev}, we have
\begin{equation}
c'_{i,k}=\sum_{j=0}^n[D_n]_{kj}c_{i,j},\quad i\in\rangez{d},~k\in\rangez{n+1},
\label{eq:cheby_c'}
\end{equation}
where $D_n$ is the $(n+1) \times (n+1)$ upper triangular matrix with nonzero entries
\begin{equation}
[D_n]_{kj}=\frac{2j}{\sigma_k},\qquad \text{$k+j$ odd},~ j>k,
\label{eq:cheby_Dn}
\end{equation}
where
\begin{equation}
\sigma_k := \begin{cases}
2 & k=0\\
1 & k\in\range{n} := \{1,2,\ldots,n\}.
\end{cases}
\end{equation}

Using this expression in \eq{cheby_ode}, \eq{cheby_c'}, and \eq{cheby_Dn}, we obtain the following linear equations:
\begin{equation}
\label{eq:linear_eq}
\sum_{k=0}^nT_k(t_l)c'_{i,k}=\sum_{j=0}^{d-1}A_{ij}(t_l)\sum_{k=0}^nT_k(t_l)c_{j,k}+f(t_l)_i,\quad i\in\rangez{d},~l\in\rangez{n+1}.
\end{equation}
We also demand that the Chebyshev series satisfies the initial condition $x_i(1)=\gamma_i$ for all $i\in\rangez{d}$. This system of linear equations gives a global approximation of the underlying system of differential equations. Instead of locally approximating the ODE at discretized times, these linear equations use the behavior of the differential equations at the $n+1$ times $\{t_l\}_{l=0}^n$ to capture their behavior over the entire interval $[-1,1]$.

Our algorithm solves this linear system using the high-precision QLSA \cite{CKS15}. Given an encoding of the Chebyshev coefficients $c_{ik}$, we can obtain the approximate solution $x(t)$ as a suitable linear combination of the $c_{ik}$, a computation that can also be captured within a linear system. The resulting approximate solution $x(t)$ is close to the exact solution $\hat{x}(t)$:

\begin{lemma}[Lemma 19 of \cite{Ghe07}]\label{lem:spectral_approx}
Let $\hat{x}(t)\in C^{r+1}(-1,1)$ be the solution of the differential equations \eq{ode} and let $x(t)$ satisfy \eq{cheby_ode} and \eq{ode_init} for $\{t_l=\cos\frac{l\pi}{n}\}_{l=0}^n$. Then there is a constant $C$, independent of $n$, such that
\begin{equation}
\max_{t\in[-1,1]}\|\hat{x}(t)-x(t)\|\le C \max_{t\in[-1,1]}\frac{\|\hat{x}^{(n+1)}(t)\|}{n^{r-2}}.
\end{equation}
\end{lemma}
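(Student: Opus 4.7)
The plan is to estimate the error by inserting the Chebyshev interpolant of the exact solution and applying the triangle inequality. Let $I_n \hat{x}$ denote the unique polynomial of degree at most $n$ that agrees with $\hat{x}$ at the Chebyshev-Gauss-Lobatto nodes $\{t_l\}_{l=0}^n$. Since $x(t)$ is itself a polynomial of degree at most $n$ by \eq{cheby_expand}, both $I_n\hat{x}$ and $x$ lie in the same $(n+1)$-dimensional space, so it is natural to split
\begin{equation*}
\hat{x}(t) - x(t) = \bigl(\hat{x}(t) - I_n \hat{x}(t)\bigr) + \bigl(I_n \hat{x}(t) - x(t)\bigr)
\end{equation*}
and bound the two pieces separately: by a classical approximation-theoretic estimate and by a stability argument for the pseudo-spectral operator.

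For the first piece I would invoke the standard interpolation error bound for Chebyshev-Gauss-Lobatto nodes: for $\hat{x} \in C^{r+1}(-1,1)$, one has an estimate roughly of the form $\max_t \|\hat{x}(t) - I_n\hat{x}(t)\| \le C_1 \max_t \|\hat{x}^{(n+1)}(t)\|/n^{r}$, obtained from the Lagrange remainder formula together with known bounds on the nodal polynomial $\prod_l (t-t_l)$ at Chebyshev points (equivalently, a Jackson-type theorem combined with the $O(\log n)$ Lebesgue constant for Gauss-Lobatto interpolation). This already matches the dominant decay in the stated bound up to two powers of $n$.

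The second piece is the technical heart. Write $e(t) := I_n\hat{x}(t) - x(t)$, a polynomial of degree at most $n$. Substituting $\hat{x} = I_n\hat{x} + (\hat{x} - I_n\hat{x})$ into the ODE \eq{ode}, evaluating at each node $t_l$, and subtracting the pseudo-spectral equation \eq{cheby_ode} satisfied by $x$ yields a discrete linear system for $e$ with the same structure as the one defining $x$, but driven by a residual built from $(\hat{x} - I_n\hat{x})(t_l)$ and $(\hat{x} - I_n\hat{x})'(t_l)$, and with zero initial condition since $I_n\hat{x}(t_0) = \hat{x}(t_0) = \gamma = x(t_0)$. The derivative residual is controlled via the Markov-type inequality $\|p'\|_\infty \le n^2 \|p\|_\infty$ for polynomials $p$ of degree at most $n$ on $[-1,1]$, which costs one factor of $n^2$; together with a stability bound for the pseudo-spectral linear operator, this gives $\max_t \|e(t)\| \le C_2 \max_t \|\hat{x}^{(n+1)}(t)\|/n^{r-2}$. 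Combining with the first piece via the triangle inequality and absorbing constants yields the stated bound.

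The main obstacle is the stability estimate just invoked: one must show that the linear operator encoded in \eq{linear_eq}, essentially $D_n$ from \eq{cheby_Dn} composed with the diagonal sampling of $A(t)$ at the nodes and subject to the initial constraint, is invertible with an inverse bounded uniformly in $n$. The standard route, followed in \cite{Ghe07}, is an energy estimate in the discrete inner product induced by Gauss-Lobatto quadrature together with a discrete Gronwall inequality to propagate the initial condition across the interpolation nodes; it is the only place where the precise structure of $A(t)$ enters beyond its boundedness, and it is what determines the implicit constant $C$. Once this stability bound is in hand, the argument is purely approximation-theoretic bookkeeping.
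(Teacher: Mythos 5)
The first thing to note is that the paper does not prove this statement at all: it is imported verbatim as Lemma~19 of \cite{Ghe07} and used as a black box, so there is no in-paper proof to compare your argument against. What you have written is a sketch of how such results are established in the spectral-methods literature, and the architecture is the right one: split the error as $(\hat{x}-I_n\hat{x})+(I_n\hat{x}-x)$, control the first piece by interpolation theory, and control the second by a stability (bounded-inverse) estimate for the collocation operator, noting that the residual driving $e=I_n\hat{x}-x$ reduces to the derivative of the interpolation error because $I_n\hat{x}$ agrees with $\hat{x}$ at the nodes and at $t_0$. That is essentially the consistency-plus-stability template used in \cite{Ghe07} and \cite{STW11}.

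That said, the sketch has two genuine gaps beyond the stability bound you explicitly defer. First, the interpolation estimate you invoke, $\max_t\|\hat{x}-I_n\hat{x}\|\le C_1\max_t\|\hat{x}^{(n+1)}\|/n^{r}$, does not follow from either route you name: the Lagrange remainder formula requires $\hat{x}\in C^{n+1}$ and yields a bound of the form $\max_t\|\hat{x}^{(n+1)}\|/(2^n(n+1)!)$ with no algebraic factor $n^{-r}$, while the Jackson-plus-Lebesgue-constant route for $\hat{x}\in C^{r+1}$ yields decay $n^{-r}$ but measured against $\|\hat{x}^{(r+1)}\|$, not $\|\hat{x}^{(n+1)}\|$. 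The hybrid form you (and, to be fair, the lemma statement itself) write down mixes these two regimes, and you would need to commit to one and track which derivative actually appears. Second, Markov's inequality $\|p'\|_\infty\le n^2\|p\|_\infty$ applies only to polynomials, but the quantity you need to bound is $(\hat{x}-I_n\hat{x})'(t_l)$, and $\hat{x}-I_n\hat{x}$ is not a polynomial; the standard repair is to insert the best polynomial approximant $p^*$, apply Markov to the polynomial $I_n\hat{x}-p^*$, and control $\|I_n\hat{x}-p^*\|$ by $(1+\Lambda_n)$ times the best-approximation error. As written, the step that produces the factor $n^2$ (and hence the exponent $r-2$) is not justified.
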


This shows that the convergence behavior of the spectral method is related to the smoothness of the solution. For a solution in $C^{r+1}$, the spectral method approximates the solution with $n=\poly({1}/{\epsilon})$. Furthermore, if the solution is smoother, we have an even tighter bound:

\begin{lemma}[Eq.~(1.8.28) of \cite{STW11}]\label{lem:smooth_spectral_approx}
Let $\hat{x}(t)\in C^{\infty}(-1,1)$ be the solution of the differential equations \eq{ode} and let $x(t)$ satisfy \eq{cheby_ode} and \eq{ode_init} for $\{t_l=\cos\frac{l\pi}{n}\}_{l=0}^n$. Then
\begin{equation}
\max_{t\in[-1,1]}\|\hat{x}(t)-x(t)\|\le \sqrt{\frac{2}{\pi}} \max_{t\in[-1,1]}\|\hat{x}^{(n+1)}(t)\|\left(\frac{e}{2n}\right)^n.
\end{equation}
\end{lemma}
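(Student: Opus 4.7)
The plan is to reduce the pseudospectral error bound to the classical estimate for Chebyshev interpolation of smooth functions at the Gauss-Lobatto nodes, and then transfer that bound to the collocation solution $x$ via a stability argument. Let $I_n\hat x$ denote the polynomial of degree at most $n$ that agrees with $\hat x$ at $\{t_l = \cos(l\pi/n)\}_{l=0}^n$. I would first bound $\|\hat x - I_n\hat x\|_\infty$ and then control $\|x - I_n\hat x\|_\infty$, combining them by the triangle inequality.

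For the interpolation error, the Lagrange remainder formula provides, for each $t \in [-1,1]$, a point $\xi_t \in (-1,1)$ with
\[
\hat x(t) - I_n\hat x(t) = \frac{\hat x^{(n+1)}(\xi_t)}{(n+1)!}\,\omega_n(t), \qquad \omega_n(t) := \prod_{l=0}^n(t-t_l).
\]
Because the Gauss-Lobatto nodes are exactly the zeros of $(1-t^2)T_n'(t)$, the node polynomial factors as $\omega_n(t) = (t^2-1)U_{n-1}(t)/2^{n-1}$, where $U_{n-1}$ is the Chebyshev polynomial of the second kind. Substituting $t = \cos\theta$ converts $(t^2-1)U_{n-1}(t)$ into $-\sin\theta\,\sin(n\theta)$, so $\|\omega_n\|_\infty \le 2^{1-n}$. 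Combining this with Stirling's inequality $(n+1)! \ge \sqrt{2\pi(n+1)}\,((n+1)/e)^{n+1}$ and the trivial $(n+1)^n \ge n^n$ rearranges into the claimed form $\sqrt{2/\pi}\,\|\hat x^{(n+1)}\|_\infty(e/(2n))^n$, with sub-exponential Stirling slack to spare.

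The remaining step is a stability argument transferring this rate from $I_n\hat x$ to the pseudospectral solution $x$. Plugging $I_n\hat x$ into \eq{cheby_ode} leaves a residual bounded by $\|\hat x' - (I_n\hat x)'\|_\infty + \max_t\|A(t)\|\cdot\|\hat x - I_n\hat x\|_\infty$, both of which decay at the same spectral rate (a Markov-type inequality handles the derivative term). Since $x$ solves \eq{cheby_ode} exactly, $x - I_n\hat x$ is the image of this residual under the inverse of the block-structured collocation operator, and inherits the same rate provided that inverse is uniformly bounded. Establishing such a uniform bound for general time-dependent $A(t)$ is the main obstacle; fortunately, it is supplied by the condition number analysis in \sec{condition_number}, closing the argument and recovering the textbook estimate.
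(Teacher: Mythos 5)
The paper offers no proof of this statement at all: it is imported as Eq.~(1.8.28) of \cite{STW11}, so there is no internal argument for yours to be measured against. The first half of your proposal is correct and essentially reconstructs where that cited estimate comes from. The Gauss--Lobatto nodes are indeed the zeros of $(1-t^2)T_n'(t)$, the monic node polynomial is $(t^2-1)U_{n-1}(t)/2^{n-1}$ with sup norm $2^{1-n}$, and Stirling converts $2^{1-n}/(n+1)!$ into $\sqrt{2/\pi}\,(e/(2n))^n$ times a leftover factor of order $(n+1)^{-3/2}$. That the constant $\sqrt{2/\pi}$ falls out exactly of the interpolation remainder is itself a strong hint that the inequality, in its clean form, is a bound on $\|\hat x - I_n\hat x\|$ rather than on the collocation error.

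The gap is in your transfer step from $I_n\hat x$ to $x$, and it is not a small one. First, appealing to \sec{condition_number} is circular: the proof of \lem{condition_number} invokes \lem{error}, whose proof rests on the present lemma, so that stability bound is not available here. Second, even setting circularity aside, the bound proved there is $\|L^{-1}\| \le (\pi m+p+1)(n+1)^{0.5}(2\kappa_V+e\|\gamma\|)$; it depends on $\kappa_V$ and grows with $n$, so feeding your residual through it cannot reproduce a bound with the bare constant $\sqrt{2/\pi}$ and no dependence on $A$ whatsoever. Third, the Markov-type inequality you invoke for the derivative residual costs a factor of order $n^2$, which overwhelms the $O((n+1)^{-3/2})$ Stirling slack you are counting on, so even the \emph{rate} of the residual is not established by what you wrote, let alone the constant. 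To make this route work you would need a stability estimate for the collocation operator that is uniform in $n$ and independent of the later sections of the paper; your outline correctly identifies this as the main obstacle but does not supply it, so the argument as it stands proves the interpolation estimate, not the stated lemma.
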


For simplicity, we replace the value $\sqrt{{2}/{\pi}}$ by the upper bound of $1$ in the following analysis.

This result implies that if the solution is in $C^{\infty}$, the spectral method approximates the solution to within $\epsilon$ using only $n=\poly(\log(1/\epsilon))$ terms in the Chebyshev series. Consequently, this approach gives a quantum algorithm with complexity $\poly(\log(1/\epsilon))$.

\section{Linear system}
\label{sec:linear_system}

In this section we construct a linear system that encodes the solution of a system of differential equations via the Chebyshev pseudospectral method introduced in \sec{spectral}. We consider a system of linear, first-order, time-dependent ordinary differential equations, and focus on the following initial value problem:

\begin{problem}\label{prb:ode}
In the \emph{quantum ODE problem}, we are given a system of equations
\begin{equation}
\frac{\d{x(t)}}{\d{t}}=A(t)x(t)+f(t)
\end{equation}
where $x(t) \in\C^d$, $A(t) \in \C^{d\times d}$ is $s$-sparse, and $f(t) \in \C^d$ for all $t\in[0,T]$. We assume that $A_{ij},f_i \in C^\infty(0,T)$ for all $i,j \in \range{d}$. We are also given an initial condition $x(0) = \gamma \in \C^d$. Given oracles that compute the locations and values of nonzero entries of $A(t)$ for any $t$, and that prepare normalized states $|\gamma\rangle$ proportional to $\gamma$ and $|f(t)\rangle$ proportional to $f(t)$ for any $t \in [0,T]$, the goal is to output a quantum state $|x(T)\rangle$ that is proportional to $x(T)$.
\end{problem}

Without loss of generality, we rescale the interval $[0,T]$ onto $[-1,1]$ by the linear map $t \mapsto 1-2t/T$. Under this rescaling, we have $\frac{\d}{\d{t}} \mapsto -\frac{T}{2}\frac{\d}{\d{t}}$, so $A \mapsto -\frac{T}{2}A$, which can dramatically increase the spectral norm. To avoid this phenomenon, we divide the interval $[0,T]$ into subintervals $[0,\Gamma_1],[\Gamma_1,\Gamma_2],\ldots,[\Gamma_{m-1},T]$ with $\Gamma_0:=0, \Gamma_m:=T$.
Each subinterval $[\Gamma_h,\Gamma_{h+1}]$ for $h\in\rangez{m}$ is then rescaled onto $[-1,1]$ with the linear map $K_h \colon [\Gamma_h,\Gamma_{h+1}] \to [-1,1]$ defined by
\begin{equation}
  \begin{aligned}
    K_h \colon t \mapsto 1-\frac{2(t-\Gamma_h)}{\Gamma_{h+1}-\Gamma_h},
  \end{aligned}
\label{eq:rescaling_map}
\end{equation}
which satisfies $K_h(\Gamma_h)=1$ and $K_h(\Gamma_{h+1})=-1$. To solve the overall initial value problem, we simply solve the differential equations for each successive interval (as encoded into a single system of linear equations).

Now let $\tau_h:=|\Gamma_{h+1}-\Gamma_h|$ and define
\begin{align}
  A_h(t) &:= -\frac{\tau_h}{2}A(K_h(t)) \label{eq:Ah} \\
  \label{eq:rescaling_rate}
  x_h(t) &:= x(K_h(t)) \\
  f_h(t) &:= -\frac{\tau_h}{2}f(K_h(t)).
\end{align}
Then, for each $h\in\rangez{m}$, we have the rescaled differential equations
\begin{equation}
\frac{\d{x_h}}{\d{t}}=A_h(t)x_h(t)+f_h(t)
\end{equation}
for $t\in[-1,1]$ with the initial conditions
\begin{equation}
  x_h(1) = \begin{cases}
  \gamma & h=0 \\
  x_{h-1}(-1) & h\in\range{m}.
  \end{cases}
\end{equation}
By taking
\begin{equation}
\label{eq:tau}
\tau_h \le \frac{2}{\max_{t \in [\Gamma_h,\Gamma_{h+1}]} \norm{A(t)}}
\end{equation}
where $\|{\cdot}\|$ denotes the spectral norm,
we can ensure that $\norm{A_h(t)} \le 1$ for all $t \in [-1,1]$. In particular, it suffices to take
\begin{equation}
\label{eq:tau_max}
  \tau := \max_{h \in \{0,1,\ldots,m-1\}} \tau_h \le \frac{2}{\max_{t \in [0,T]} \norm{A(t)}}.
\end{equation}

Having rescaled the equations to use the domain $[-1,1]$, we now apply the Chebyshev pseudospectral method. Following \sec{spectral}, we substitute the truncated Chebyshev series of $x(t)$ into the differential equations with interpolating nodes $\{t_l=\cos\frac{l\pi}{n} : l \in \range{n}\}$, giving the linear system
\begin{equation}
\frac{\d{x(t_l)}}{\d{t}}=A_h(t_l)x(t_l)+f_h(t_l), \quad h\in\rangez{m}, l\in\range{n+1}
\label{eq:interpolation_diff}
\end{equation}
with initial condition
\begin{equation}
x(t_0)=\gamma.
\label{eq:interpolation_init}
\end{equation}
Note that in the following, terms with $l=0$ refer to this initial condition.

We now describe a linear system
\begin{equation}
L|X\rangle=|B\rangle
\label{eq:linear_system}
\end{equation}
that encodes the Chebyshev pseudospectral approximation and uses it to produce an approximation of the solution at time $T$.

The vector $|X\rangle \in \C^{m+p} \otimes \C^d \otimes \C^{n+1}$ represents the solution in the form
\begin{equation}
|X\rangle=\sum_{h=0}^{m-1}\sum_{i=0}^{d-1}\sum_{l=0}^nc_{i,l}(\Gamma_{h+1})|hil\rangle+\sum_{h=m}^{m+p}\sum_{i=0}^{d-1}\sum_{l=0}^nx_i|hil\rangle
\label{eq:vectorX}
\end{equation}
where $c_{i,l}(\Gamma_{h+1})$ are the Chebyshev series coefficients of $x(\Gamma_{h+1})$ and $x_i := x(\Gamma_m)_i$ is the $i$th component of the final state $x(\Gamma_m)$.

The right-hand-side vector $|B\rangle$ represents the input terms in the form
\begin{equation}
|B\rangle=\sum_{h=0}^{m-1}|h\rangle|B(f_h)\rangle
\label{eq:vectorB}
\end{equation}
where
\begin{equation}
|B(f_h)\rangle=\sum_{i=0}^{d-1}\gamma_i|i0\rangle+\sum_{i=0}^{d-1}\sum_{l=1}^nf_{h}(\cos\tfrac{l\pi}{n})_i|il\rangle,\quad h\in\range{m-1}.
\label{eq:B_f_h}
\end{equation}
Here $\gamma$ is the initial condition and $f_{h}(\cos\frac{l\pi}{n})_i$ is $i$th component of $f_h$ at the interpolation point $t_l=\cos\frac{l\pi}{n}$.

We decompose the matrix $L$ in the form
\begin{equation}
L=\sum_{h=0}^{m-1}|h\rangle\langle h|\otimes(L_1+L_2(A_h))+\sum_{h=1}^{m}|h\rangle\langle h-1|\otimes L_3+\sum_{h=m}^{m+p}|h\rangle\langle h|\otimes L_4+\sum_{h=m+1}^{m+p}|h\rangle\langle h-1|\otimes L_5.
\label{eq:matrixL}
\end{equation}
We now describe each of the matrices $L_i$ for $i \in \range{5}$ in turn.

The matrix $L_1$ is a discrete representation of $\frac{\d{x}}{\d{t}}$, satisfying
\begin{equation}
|h\rangle\langle h| \otimes L_1|X\rangle = \sum_{i=0}^{d-1}\sum_{k=0}^nT_k(t_0)c_{i,k} |hi0\rangle+\sum_{i=0}^{d-1}\sum_{l=1,k,r=0}^nT_k(t_l)[D_n]_{kr}c_{i,r} |hil\rangle
\end{equation}
(recall from \eq{cheby_deriv} and \eq{cheby_c'} that $D_n$ encodes the action of the time derivative on a Chebyshev expansion).
Thus $L_1$ has the form
\begin{align}
L_1
&=\sum_{i=0}^{d-1}\sum_{k=0}^nT_k(t_0)|i0\rangle\langle ik|+\sum_{i=0}^{d-1}\sum_{l=1,k,r=0}^n\cos\frac{kl\pi}{n}[D_n]_{kr}|il\rangle\langle ir| \\
&=I_d\otimes (|0\rangle\langle 0|P_n+\sum_{l=1}^n|l\rangle\langle l|P_nD_n)
\end{align}
where the interpolation matrix is a \emph{discrete cosine transform} matrix:
\begin{equation}
P_n:=\sum_{l,k=0}^n \cos\frac{kl\pi}{n}|l\rangle\langle k|.
\label{eq:dct}
\end{equation}

The matrix $L_2(A_h)$ discretizes $A_h(t)$, i.e.,
\begin{align}
  |h\rangle\langle h| \otimes L_2(A_h)|X\rangle
  =-\sum_{i,j=0}^{d-1}\sum_{l=1,k=0}^n A_h(t_l)_{ij}T_k(t_l)c_{j,k}|hil\rangle.
\end{align}
Thus
\begin{align}
L_2(A_h)
&=-\sum_{i,j=0}^{d-1}\sum_{l=1,k=0}^n A_h(t_l)_{ij} \cos\frac{kl\pi}{n}|il\rangle\langle jk| \\
&=-\sum_{l=1}^nA_h(t_l)\otimes|l\rangle\langle l|P_n.
\label{eq:matrixL2}
\end{align}
Note that if $A_h$ is time-independent, then
\begin{equation}
L_2(A_h)=-A_h\otimes P_n.
\end{equation}

The matrix $L_3$ combines the Chebyshev series coefficients $c_{i,l}$ to produce $x_i$ for each $i\in\rangez{d}$. To express the final state $x(-1)$, $L_3$ represents the linear combination $x_i(-1)=\sum_{k=0}^nc_{i,k}T_k(-1)=\sum_{k=0}^n(-1)^kc_{i,k}$. Thus we take
\begin{equation}
L_3=\sum_{i=0}^{d-1}\sum_{k=0}^n(-1)^k|i0\rangle\langle ik|.
\end{equation}
Notice that $L_3$ has zero rows for $l\in\range{n}$.

When $h=m$, $L_4$ is used to construct $x_i$ from the output of $L_3$ for $l=0$, and to repeat $x_i$ $n$ times for $l\in\range{n}$. When $m+1\le h\le m+p$, both $L_4$ and $L_5$ are used to repeat $x_i$ $(n+1)p$ times for $l\in\range{n}$. This repetition serves to increase the success probability of the final measurement. In particular, we take
\begin{equation}
L_4=-\sum_{i=0}^{d-1}\sum_{l=1}^n|il\rangle\langle il-1|+\sum_{i=0}^{d-1}\sum_{l=0}^n|il\rangle\langle il|
\end{equation}
and
\begin{equation}
L_5=-\sum_{i=0}^{d-1}|i0\rangle\langle in|.
\end{equation}

In summary, the linear system is as follows. For each $h \in \rangez{m}$, $(L_1+L_2(A_h))|X\rangle=|B_h\rangle$ solves the differential equations over $[\Gamma_h,\Gamma_{h+1}]$, and the coefficients $c_{i,l}(\Gamma_{h+1})$ are combined by $L_3$ into the $(h+1)$st block as initial conditions. When $h=m$, the final coefficients $c_{i,l}(\Gamma_m)$ are combined by $L_3$ and $L_4$ into the final state with coefficients $x_i$, and this solution is repeated $(p+1)(n+1)$ times by $L_4$ and $L_5$.

To explicitly illustrate the structure of this system, we present a simple example in \app{example}.

\section{Solution error}
\label{sec:error}

In this section, we bound how well the solution of the linear system defined above approximates the actual solution of the system of differential equations.

\begin{lemma}\label{lem:error}
For the linear system $L|X\rangle=|B\rangle$ defined in \eq{linear_system}, let $x$ be the approximate ODE solution specified by the linear system and let $\hat{x}$ be the exact ODE solution. For all $t\in[0,T]$, assume $A(t)$ can be diagonalized as $A(t)=V(t)\Lambda(t)V^{-1}(t)$ for some $\Lambda(t)=\diag(\lambda_0(t),\ldots,\lambda_d(t))$ with $\Re(\lambda_i(t))\le0$ for all $i\in\rangez{d}$. Let $\kappa_V:=\max_t\kappa_V(t)$ be an upper bound of the condition number of $V(t)$. Then for $n$ sufficiently large, the error in the solution at time $T$ satisfies
\begin{equation}
\|\hat{x}(T)-x(T)\|\le m\max_{t\in[0,T]}\|\hat{x}^{(n+1)}(t)\|\frac{e^{n+1}}{(2n)^n}.
\end{equation}
\end{lemma}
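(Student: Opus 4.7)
The plan is to bound the error at time $T$ by induction over the $m$ subintervals introduced in \sec{linear_system}. Let $E_h := \|\hat x(\Gamma_h) - x(\Gamma_h)\|$ denote the error at the boundary between consecutive subintervals, with $E_0 = 0$ since the initial condition is imposed exactly by the first block of the linear system \eq{linear_system}. The goal is to establish a one-step recurrence of the form $E_{h+1} \le \rho_h\,E_h + \varepsilon_h$, where $\rho_h$ is a propagation factor produced by the homogeneous ODE flow on $[\Gamma_h,\Gamma_{h+1}]$ and $\varepsilon_h$ is the new spectral discretization error generated on that subinterval.

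To isolate the two contributions, I would introduce the auxiliary exact trajectory $\tilde x_h(t)$ that solves \eq{ode} on $[\Gamma_h,\Gamma_{h+1}]$ starting from the (possibly inexact) value $x(\Gamma_h)$, and split
\begin{equation*}
\hat x(\Gamma_{h+1}) - x(\Gamma_{h+1}) = \bigl[\hat x(\Gamma_{h+1}) - \tilde x_h(\Gamma_{h+1})\bigr] + \bigl[\tilde x_h(\Gamma_{h+1}) - x(\Gamma_{h+1})\bigr].
\end{equation*}
The second bracket is exactly the error of the Chebyshev pseudospectral method applied on a single rescaled interval with the correct-for-that-interval initial condition $x(\Gamma_h)$, so \lem{smooth_spectral_approx} gives $\|\tilde x_h(\Gamma_{h+1}) - x(\Gamma_{h+1})\| \le \max_t\|\tilde x_h^{(n+1)}(t)\|\,(e/(2n))^n$. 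The chain rule from the rescaling \eq{rescaling_map} converts the $(n+1)$-st derivative on $[-1,1]$ into $(\tau_h/2)^{n+1}\le 1$ times the derivative on $[\Gamma_h,\Gamma_{h+1}]$ (using \eq{tau_max}), and a short perturbation estimate relates $\|\tilde x_h^{(n+1)}\|$ to $\|\hat x^{(n+1)}\|$ up to corrections controlled by $E_h$; this contributes $\varepsilon_h \le M\,(e/(2n))^n$ with $M := \max_{t\in[0,T]}\|\hat x^{(n+1)}(t)\|$.

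For the first bracket I would use linearity: $\hat x - \tilde x_h$ satisfies the homogeneous equation $\dot y = A(t)y$ with initial datum $\hat x(\Gamma_h) - x(\Gamma_h)$, so its norm at $\Gamma_{h+1}$ is bounded by the operator norm of the associated state-transition operator on $[\Gamma_h,\Gamma_{h+1}]$. This is where the structural hypotheses of the lemma enter: the diagonalization $A(t) = V(t)\Lambda(t)V^{-1}(t)$ with $\Re\lambda_i(t)\le 0$ together with the uniform condition-number bound $\kappa_V$ can be used, via a freezing-coefficients estimate justified by $\tau_h\|A\|\le 2$, to bound $\rho_h \le \kappa_V$ per subinterval, which in the dissipative regime refines to $\rho_h \le 1 + O(\tau_h)$ so that the cumulative factor $\prod_h \rho_h$ does not blow up exponentially in $m$.

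Unrolling the recurrence yields $E_m \le M\,(e/(2n))^n \sum_{h=0}^{m-1}\prod_{h'=h+1}^{m-1}\rho_{h'}$, and the last task is to reduce this to the stated bound $m\,M\,e^{n+1}/(2n)^n = m\,M\cdot e\cdot (e/(2n))^n$. The main obstacle is exactly this final simplification: one has to make the ``$n$ sufficiently large'' clause precise, so that both the perturbation corrections hidden in $\varepsilon_h$ and the cumulative propagation overhead $\prod_{h'}\rho_{h'}$ contribute at most the single extra factor of $e$ visible in the target inequality. I expect this to amount to requiring $n$ polylogarithmic in $\kappa_V$ and $m$, after which the rapidly decaying $(e/(2n))^n$ absorbs every residual constant and the arithmetic sum over $m$ subintervals produces the claimed prefactor $m$.
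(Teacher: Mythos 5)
Your overall strategy --- an interval-by-interval recurrence $E_{h+1}\le\rho_h E_h+\varepsilon_h$ with the fresh spectral error $\varepsilon_h$ controlled by \lem{smooth_spectral_approx} and the recurrence then unrolled over the $m$ subintervals --- is the same as the paper's, up to a mirror-image decomposition: you compare against the exact flow restarted from the perturbed value $x(\Gamma_h)$, whereas the paper compares against the numerical solution restarted from the exact value $\hat x(\Gamma_h)$ and then treats the initial-condition discrepancy as a separate homogeneous subproblem, which lets it invoke \lem{smooth_spectral_approx} a second time instead of needing your perturbation estimate relating $\tilde x_h^{(n+1)}$ to $\hat x^{(n+1)}$.

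There is, however, a genuine gap in how you propose to close the recurrence, namely in the propagation factor $\rho_h$. You suggest $\rho_h\le\kappa_V$, refined to $1+O(\tau_h)$ by freezing coefficients, and hope that the ``$n$ sufficiently large'' clause absorbs the cumulative overhead $\prod_h\rho_h$ into the single factor $e$ of the target. Neither works: $\prod_h\bigl(1+O(\tau_h)\bigr)=e^{O(\sum_h\tau_h)}=e^{O(T)}$ (and $\kappa_V^m$ is worse), and taking $n$ large cannot absorb a multiplicative constant exceeding $e$, because the quantity it would have to be absorbed into, $M\,(e/(2n))^n$ with $M=\max_t\|\hat x^{(n+1)}(t)\|$, is exactly what appears on the right-hand side of the claimed bound --- shrinking it shrinks both sides equally. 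The paper instead takes the exact homogeneous flow to be non-expanding, $\rho_h\le1$ (this is where $\Re\lambda_i(t)\le0$ enters), so the only amplification of the propagated error $\Delta_h$ is the spectral-method error of the homogeneous subproblem, $\tfrac{M}{\|\gamma\|}(e/(2n))^n$ per step. The ``$n$ sufficiently large'' condition is the explicit choice $n\ge\tfrac{e}{2}\lfloor\log\omega/\log\log\omega\rfloor$ with $\omega=(m+1)M/\|\gamma\|$, made precisely so that this per-step amplification is at most $\tfrac{1}{m+1}$, whence the cumulative factor is $(1+\tfrac1m)^m\le e$ and summing the $m$ per-step errors yields the prefactor $m$. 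In short, the factor $e$ in the statement comes from $(1+\tfrac1m)^m$, not from controlling the ODE flow; your plan as written would leave an unabsorbed $e^{O(T)}$ (or $\kappa_V$-dependent) factor in the final bound.
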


\begin{proof}
First we carefully choose $n$ satisfying
\begin{equation}
n\ge\frac{e}{2}\left\lfloor\frac{\log(\omega)}{\log(\log(\omega))}\right\rfloor
\end{equation}
where
\begin{equation}
\omega:=\max_{t\in[0,T]}\frac{\|\hat{x}^{(n+1)}(t)\|}{\|\gamma\|}(m+1)
\end{equation}
to ensure that
\begin{equation}
\max_{t\in[0,T]}\frac{\|\hat{x}^{(n+1)}(t)\|}{\|\gamma\|}\Bigl(\frac{e}{2n}\Bigr)^n\le\frac{1}{m+1}.
\label{eq:nm_relation}
\end{equation}

According to the quantum spectral method defined in \sec{linear_system}, we solve
\begin{equation}
\frac{\d{x}}{\d{t}}=A_h(t)x(t)+f_h(t),\quad h\in\rangez{m}.
\label{eq:spilt_linear_system}
\end{equation}
We denote the exact solution by $\hat{x}(\Gamma_{h+1})$, and we let $x(\Gamma_{h+1})=\sum_{i=0}^d\sum_{l=0}^n(-1)^nc_{i,l}(\Gamma_{h+1})$, where $c_{i,l}(\Gamma_{h+1})$ is defined in \eq{vectorX}. Define
\begin{equation}
\Delta_{h+1}:=\|\hat{x}(\Gamma_{h+1})-x(\Gamma_{h+1})\|.
\end{equation}

For $h=0$, \lem{smooth_spectral_approx} implies
\begin{equation}
\Delta_1=\|\hat{x}(\Gamma_1)-x(\Gamma_1)\|\le\max_{t\in[0,T]}\|\hat{x}^{(n+1)}(t)\|\Bigl(\frac{e}{2n}\Bigr)^n.
\end{equation}

For $h \in \range{m}$, the error in the approximate solution of $\frac{\d{x}}{\d{t}}=A_h(t)x(t)+f_h(t)$ has two contributions: the error from the linear system and the error in the initial condition. We let $\widetilde{x}(\Gamma_{h+1})$ denote the solution of the linear system $\bigl(L_1 + L_2(A_h)\bigr) |\widetilde{x}(\Gamma_{h+1})\rangle = |B(f_h)\rangle$ under the initial condition $\hat{x}(\Gamma_h)$.
Then
\begin{equation}
\Delta_{h+1}
\le\|\hat{x}(\Gamma_{h+1})-\widetilde{x}(\Gamma_{h+1})\|+\|\widetilde{x}(\Gamma_{h+1})-x(\Gamma_{h+1})\|.
\end{equation}

The first term can be bounded using \lem{smooth_spectral_approx}, giving
\begin{equation}
\|\hat{x}(\Gamma_{h+1})-\widetilde{x}(\Gamma_{h+1})\|\le\max_{t\in[0,T]}\|\hat{x}^{(n+1)}(t)\|\Bigl(\frac{e}{2n}\Bigr)^n.
\end{equation}
The second term comes from the initial error $\Delta_h$, which is transported through the linear system. Let
\begin{equation}
E_{h+1}=\hat{E}_{h+1}+\delta_{h+1}
\end{equation}
where $E_{h+1}$ is the solution of the linear system with input $\Delta_h$ and $\hat{E}_{h+1}$ is the exact solution of $\frac{\d{x}}{\d{t}}=A_{h+1}(t)x(t)+f_{h+1}(t)$ with initial condition $x(\Gamma_h)=\Delta_h$. Then by \lem{smooth_spectral_approx},
\begin{equation}
\|\delta_{h+1}\|=\|\hat{E}_{h+1}-E_{h+1}\|\le\frac{\Delta_h}{\|\gamma\|}\max_{t\in[0,T]}\|\hat{x}^{(n+1)}(t)\|\Bigl(\frac{e}{2n}\Bigr)^n,
\end{equation}
so
\begin{equation}
\|\widetilde{x}(\Gamma_{h+1})-x(\Gamma_{h+1})\|\le\Delta_h+\frac{\Delta_h}{\|\gamma\|}\max_{t\in[0,T]}\|\hat{x}^{(n+1)}(t)\|\Bigl(\frac{e}{2n}\Bigr)^n.
\end{equation}
Thus, we have an inequality recurrence for bounding $\Delta_h$:
\begin{align}
\Delta_{h+1}&\le
\Bigl(1+\max_{t\in[0,T]}\frac{\|\hat{x}^{(n+1)}(t)\|}{\|\gamma\|}\Bigl(\frac{e}{2n}\Bigr)^n\Bigr)\Delta_h+\max_{t\in[0,T]}\|\hat{x}^{(n+1)}(t)\|\Bigl(\frac{e}{2n}\Bigr)^n.
\end{align}

Now we iterate $h$ from $1$ to $m$. Equation \eq{nm_relation} implies
\begin{equation}
\max_{t\in[0,T]}\frac{\|\hat{x}^{(n+1)}(t)\|}{\|\gamma\|}\Bigl(\frac{e}{2n}\Bigr)^n\le\frac{1}{m+1}\le\frac{1}{m},
\end{equation}
so
\begin{equation}
\Bigl(1+\max_{t\in[0,T]}\frac{\|\hat{x}^{(n+1)}(t)\|}{\|\gamma\|}\Bigl(\frac{e}{2n}\Bigr)^n\Bigr)^{m-1}\le\Bigl(1+\frac{1}{m}\Bigr)^m\le e.
\end{equation}
Therefore
\begin{equation}
\begin{aligned}
\Delta_m
&\le\Bigl(1+\max_{t\in[0,T]}\frac{\|\hat{x}^{(n+1)}(t)\|}{\|\gamma\|}\Bigl(\frac{e}{2n}\Bigr)^n\Bigr)^{m-1}\Delta_1 \\
&\quad+\sum_{h=1}^{m-1}\Bigl(1+\max_{t\in[0,T]}\frac{\|\hat{x}^{(n+1)}(t)\|}{\|\gamma\|}\Bigl(\frac{e}{2n}\Bigr)^n\Bigr)^{h-1}C\max_{t\in[0,T]}\|\hat{x}^{(n+1)}(t)\|\Bigl(\frac{e}{2n}\Bigr)^n\\
&\le\Bigl(1+\frac{1}{m}\Bigr)^{m-1}\Delta_1+(m-1)\Bigl(1+\frac{1}{m}\Bigr)^{m-1}C\max_{t\in[0,T]}\|\hat{x}^{(n+1)}(t)\|\Bigl(\frac{e}{2n}\Bigr)^n\\
&\le\max_{t\in[0,T]}\|\hat{x}^{(n+1)}(t)\|\frac{e^{n+1}}{(2n)^n}+(m-1)\max_{t\in[0,T]}\|\hat{x}^{(n+1)}(t)\|\frac{e^{n+1}}{(2n)^n}\\
&=m\max_{t\in[0,T]}\|\hat{x}^{(n+1)}(t)\|\frac{e^{n+1}}{(2n)^n},
\end{aligned}
\end{equation}
which shows that the solution error decreases exponentially with $n$. In other words, the quantum spectral method approximates the solution with error $\epsilon$ using $n=\poly(\log({1}/{\epsilon}))$.
\end{proof}

Note that for time-independent differential equations, we can directly estimate $\|\hat{x}^{(n+1)}(t)\|$ using
\begin{equation}
\hat{x}^{(n+1)}(t)=A_h^{n+1}\hat{x}(t)+A_h^nf_h.
\label{eq:xhat_derivatives}
\end{equation}
Writing $A_h=V_h\Lambda_hV_h^{-1}$ where $\Lambda_h=\diag(\lambda_0,\ldots,\lambda_{d-1})$, we have $e^{A_h}=V_he^{\Lambda_h}V_h^{-1}$. Thus the exact solution of time-independent equation with initial condition $\hat{x}(1)=\gamma$ is
\begin{equation}
\begin{aligned}
\hat{x}(t)
&=e^{A_h(1-t)}\gamma+(e^{A_h(1-t)}-I)A_h^{-1}f_h\\
&=V_he^{\Lambda_h}V_h^{-1}\gamma+V_h(e^{\Lambda_h(1-t)}-I)\Lambda_h^{-1}V_h^{-1}f_h.
\end{aligned}
\end{equation}
Since $\Re(\lambda_i) \le 0$ for all eigenvalues $\lambda_i$ of $A_h$ for $i \in \rangez{d}$, we have $\|e^{\Lambda_h}\|\le1$. Therefore
\begin{equation}
\|\hat{x}(t)\|\le\kappa_V(\|\gamma\|+2\|f_h\|).
\end{equation}
Furthermore, since  $\max_{h,t}\|A_h(t)\|\le1$, we have
\begin{equation}
\begin{aligned}
\max_{t\in[0,T]}\|\hat{x}^{(n+1)}(t)\|&\le\max_{t\in[0,T]}(\|\hat{x}(t)\|+\|f_h(t)\|)\\
&\le\kappa_V(\|\gamma\|+3\|f_h\|)\\
&\le\kappa_V(\|\gamma\|+2\tau\|f\|).\label{eq:time_indep_bound}
\end{aligned}
\end{equation}
Thus the solution error satisfies
\begin{equation}
\|\hat{x}(T)-x(T)\|\le m\kappa_V(\|\gamma\|+2\tau\|f\|)\frac{e^{n+1}}{(2n)^n}.
\end{equation}
Note that, although we represent the solution differently, this bound is similar to the corresponding bound in \cite[Theorem 6]{BCOW17}.

\section{Condition number}
\label{sec:condition_number}

We now analyze the condition number of the linear system.

\begin{lemma}\label{lem:condition_number}
Consider an instance of the quantum ODE problem as defined in \prb{ode}. For all $t\in[0,T]$, assume $A(t)$ can be diagonalized as $A(t)=V(t)\Lambda(t)V^{-1}(t)$ for some $\Lambda(t)=\diag(\lambda_0(t),\ldots,\lambda_d(t))$, with $\Re(\lambda_i(t))\le0$ for all $i\in\rangez{d}$. Let $\kappa_V:=\max_{t\in[0,T]}\kappa_V(t)$ be an upper bound on the condition number of $V(t)$. Then for $m,p\in\Z^+$ and $n$ sufficiently large, the condition number of $L$ in the linear system \eq{linear_system} satisfies
\begin{equation}
\kappa_L\le(\pi m+p+2)(n+1)^{3.5}(2\kappa_V+e\|\gamma\|).
\end{equation}
\end{lemma}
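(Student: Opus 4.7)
The plan is to bound $\kappa_L = \|L\|\,\|L^{-1}\|$ by treating the two factors separately, exploiting the block-bidiagonal structure of $L$ in the subinterval index $h$ given in \eq{matrixL}.

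To bound $\|L\|$, I would split $L$ into its block-diagonal part (collecting $L_1+L_2(A_h)$ and $L_4$) and its block-subdiagonal part (collecting $L_3$ and $L_5$), and apply the triangle inequality. For block-diagonal operators the norm equals the maximum over $h$ of the local block norm, so everything reduces to a few elementary bounds: $\|P_n\|=O(\sqrt{n+1})$ from the orthogonality of the Chebyshev--Gauss--Lobatto sampling (one can use that $P_n^{\ast}P_n$ is diagonal with entries of order $n$), $\|D_n\|=O(n^{2})$ from the explicit entries $2j/\sigma_k$ in \eq{cheby_Dn}, $\|A_h(t_l)\|\le 1$ by the rescaling condition \eq{tau_max}, and $\|L_3\|,\|L_4\|,\|L_5\|$ are at most $O(\sqrt{n+1})$ since they are sparse matrices with bounded entries. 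Combining these gives $\|L\|=O((n+1)^{2.5})$, with the dominant contribution coming from the product $P_nD_n$ in $L_1$.

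The bulk of the work is in bounding $\|L^{-1}\|$. I would fix an arbitrary right-hand side $|B\rangle$ of unit norm and unfold $L|X\rangle=|B\rangle$ block by block in $h$. For each $h\in\rangez{m}$ the block equation $(L_1+L_2(A_h))|X_h\rangle=|B_h\rangle+L_3|X_{h-1}\rangle$ is precisely the Chebyshev pseudospectral discretization of the rescaled ODE, whose continuous-time solution map is stable with constant $O(\kappa_V)$ thanks to $\Re(\lambda_i(t))\le 0$ together with the diagonalization $A=V\Lambda V^{-1}$ (the argument is the same one that produced \eq{time_indep_bound}). The key step is a \emph{discrete} stability estimate $\|(L_1+L_2(A_h))^{-1}\|=O((n+1)^{3/2}\kappa_V)$; I would derive this by interpolating the discrete right-hand side to a continuous source via the map $P_n$, invoking the continuous stability bound on the resulting ODE to control the solution in terms of $\kappa_V(\|\gamma\|+\|f\|)$, and then reading off Chebyshev coefficients from point values at a cost of $\|P_n^{-1}\|=O(\sqrt{n+1})$. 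Propagating this bound through the $m$ sub-intervals via $L_3$ (contributing the $\pi m$ factor from the Lebesgue-type constant for Chebyshev evaluation at $t=-1$) and then through the $p+1$ copies produced by $L_4$ and $L_5$ (contributing the additive $p+2$) yields $\|L^{-1}\|=O((\pi m+p+2)(n+1)(2\kappa_V+e\|\gamma\|))$. Multiplying by $\|L\|=O((n+1)^{2.5})$ gives the claimed bound.

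The main obstacle is the discrete stability estimate for $(L_1+L_2(A_h))^{-1}$. The assumption $\Re(\lambda_i(t))\le 0$ ensures stability of the continuous ODE, but transferring this to the pseudospectral matrix requires carefully tracking how a perturbation of the discrete right-hand side corresponds, via interpolation, to a bounded perturbation of a continuous source, together with the $\kappa_V$-dependent change of basis needed to exploit $|e^{\lambda_i(t)s}|\le 1$ pointwise. The $(n+1)^{3/2}$ overhead is the unavoidable cost of moving between nodal values and Chebyshev coefficients, and it is what ultimately dictates the $(n+1)^{3.5}$ scaling of $\kappa_L$.
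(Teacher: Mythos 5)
Your overall strategy---bounding $\|L\|$ block by block and controlling $\|L^{-1}\|$ by relating the discrete system to the continuous ODE, whose stability follows from $\Re(\lambda_i(t))\le 0$ via the diagonalization---is the same as the paper's, but two of your steps have genuine gaps. The first is the propagation through the $m$ subintervals. If you only carry an operator-norm bound $\|(L_1+L_2(A_h))^{-1}\|=O(\kappa_V\,\mathrm{poly}(n))$ from one block to the next via $L_3$, composing $m$ such bounds multiplies the per-block constants and yields something like $\kappa_V^m$, not the single factor of $\kappa_V$ with only additive accumulation in $m$ that the lemma asserts. The paper never composes norm bounds: for each elementary right-hand side $|b_{hl}\rangle$ it writes the chained solution map across all subsequent subintervals explicitly as $V(t_l)\bigl(\prod_j e^{2\Lambda_h(t_l)}\bigr)V^{-1}(t_l)$ plus accumulated truncation errors, so the inner factors $V^{-1}V$ cancel, $\bigl\|\prod_j e^{2\Lambda_h(t_l)}\bigr\|\le 1$, and only one $\kappa_V$ survives while the error terms add up to $e\|\gamma\|$. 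Your proposal needs this telescoping (or an equivalent argument) to get a bound linear in both $\kappa_V$ and $m$; "propagating the bound through $L_3$" as stated does not supply it.

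Second, your power counting in $n$ does not close. A per-block estimate $\|(L_1+L_2(A_h))^{-1}\|=O((n+1)^{3/2}\kappa_V)$ cannot produce $\|L^{-1}\|=O((\pi m+p+2)(n+1)\kappa_V)$, since the $(0,0)$ block of $L^{-1}$ is exactly $(L_1+L_2(A_0))^{-1}$; and combining $(n+1)^{3/2}$ with your $\|L\|=O((n+1)^{2.5})$ gives $(n+1)^{4}$, not $(n+1)^{3.5}$. The paper's accounting is different: it proves $\|L\|\le (n+1)^3$ (using the cruder bound $\|D_n\|\le (n+1)^{2.5}/2$) and $\|L^{-1}\|\le(\pi m+p+1)(n+1)^{0.5}(2\kappa_V+e\|\gamma\|)$, where the $(n+1)^{0.5}$ arises from bounding $L^{-1}$ on each single-node input separately and then combining the $n+1$ nodes by Cauchy--Schwarz, and the $\pi$ comes not from a Lebesgue constant at $t=-1$ but from $\int_{-1}^{1}(1-t^2)^{-1/2}\,\d t=\pi$ together with Chebyshev orthogonality, which converts the pointwise bound $\|\hat x_{hl}(t)\|\le 2\kappa_V+e\|\gamma\|$ into a bound on the $\ell^2$ norm of the Chebyshev coefficients that actually populate $|X\rangle$. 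Some such coefficient-level (rather than nodal-value-level) control is needed to reach the stated exponent.
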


\begin{proof}
We begin by bounding the norms of some operators that appear in the definition of $L$.
First we consider the  $l_{\infty}$ norm of $D_n$ since this is straightforward to calculate:
\begin{equation}
\Vert D_n\Vert_{\infty}:=\max_{1\le i\le n}\sum_{j=0}^n|[D_n]_{ij}|=
\begin{cases}
\frac{n(n+2)}{2}    &\text{$n$ even},\\
\frac{(n+1)^2}{2}-2 &\text{$n$ odd}.
\end{cases}
\end{equation}
Thus we have the upper bound
\begin{equation}
\Vert D_n\Vert\le\sqrt{n+1}\Vert D_n\Vert_{\infty}\le\frac{(n+1)^{2.5}}{2}.
\end{equation}

Next we upper bound the spectral norm of the discrete cosine transform matrix $P_n$:
\begin{equation}
\begin{aligned}
\Vert P_n\Vert^2
&\le\max_{0\le l\le n}\sum_{k=0}^n\cos^2\frac{kl\pi}{n}
=\max_{0\le l\le n}\sum_{k=0}^n\biggl(\frac{1}{2}+\cos\frac{2kl\pi}{n}\biggr)\\
&=\frac{n+1}{2}+\max_{0\le l\le n}\sum_{k=0}^n\cos\frac{2kl\pi}{n}
\le\frac{n+1}{2}+1.
\end{aligned}
\end{equation}
Therefore
\begin{equation}
\Vert P_n\Vert\le\sqrt{n+1}.
\end{equation}
Thus we can upper bound the norm of $L_1$ as
\begin{equation}
\Vert L_1\Vert\le\Vert D_n\Vert\Vert P_n\Vert\le\frac{(n+1)^3}{2}.
\end{equation}

Next we consider the spectral norm of $L_2(A_h)$ for any $h \in \rangez{m}$. We have
\begin{equation}
L_2(A_h)=-\sum_{l=1}^nA_h(t_l)\otimes|l\rangle\langle l|P_n.
\end{equation}
Since the eigenvalues of each $A_h(t_l)$ for $l \in \rangez{n+1}$ are all eigenvalues of
\begin{equation}
\sum_{l=0}^nA_h(t_l)\otimes|l\rangle\langle l|,
\end{equation}
we have
\begin{equation}
\biggl\|\sum_{l=1}^nA_h(t_l)\otimes|l\rangle\langle l|\biggr\|
\le\biggl\|\sum_{l=0}^nA_h(t_l)\otimes|l\rangle\langle l|\biggr\|
\le\max_{t \in [-1,1]}\|A_h(t)\|\le1
\end{equation}
by \eq{tau}. Therefore
\begin{equation}
\Vert L_2(A_h)\Vert\le\Vert P_n\Vert\le\sqrt{n+1}.
\end{equation}

By direct calculation, we have
\begin{align}
\Vert L_3\Vert&\le n+1, \\
\Vert L_4\Vert&=2, \\
\Vert L_5\Vert&=1.
\end{align}
Thus, for $n\ge5$, we find
\begin{equation}
\Vert L\Vert\le\frac{(n+1)^3}{2}+\sqrt{n+1}+n+4\le(n+1)^3.\
\label{eq:Lnorm}
\end{equation}

Next we upper bound $\Vert L^{-1}\Vert$. By definition,
\begin{equation}
\Vert L^{-1}\Vert=\sup_{\||B\rangle\|\le1}\|L^{-1}|B\rangle\|.
\end{equation}
We express $|B\rangle$ as
\begin{equation}
|B\rangle
=\sum_{h=0}^{m+p}\sum_{l=0}^n \sum_{i=0}^{d-1}\beta_{hil}|hil\rangle
=\sum_{h=0}^{m+p}\sum_{l=0}^n |b_{hl}\rangle
\label{eq:Bvec}
\end{equation}
where $|b_{hl}\rangle := \sum_{i=0}^{d-1}\beta_{hil}|hil\rangle$ satisfies $\||b_{hl}\rangle\|^2=\sum_{i=0}^{d-1}|\beta_{hil}|^2\le1$. For any fixed $h \in \rangez{m+p+1}$ and $l \in \rangez{n+1}$, we first upper bound $\|L^{-1}|b_{hl}\rangle\|$ and use this to upper bound the norm of $L^{-1}$ applied to linear combinations of such vectors.

Recall that the linear system comes from \eq{linear_eq}, which is equivalent to
\begin{equation}
\sum_{k=0}^nT'_k(t_r)c_{i,k}(T_h)=\sum_{j=0}^{d-1}A_h(t_r)_{ij}\sum_{k=0}^nT_k(t_r)c_{j,k}(T_h)+f_h(t_r)_i,\quad i\in\rangez{d},~r\in\rangez{n+1}.
\label{eq:equiv_system}
\end{equation}
For fixed $h \in \rangez{m+p+1}$ and $r \in \rangez{n+1}$, define vectors $x_{hr},x_{hr}' \in \C^d$ with
\begin{equation}
(x_{hr})_i := \sum_{k=0}^nT_k(t_r)c_{i,k}(T_h), \qquad
(x'_{hr})_i:= \sum_{k=0}^nT'_k(t_r)c_{i,k}(T_h)
\end{equation}
for $i \in \rangez{d}$.
We claim that $x_{hr} = x_{hr}' = 0$ for any $r \ne l$.
Combining only the equations from \eq{equiv_system} with $r \ne l$ gives the system
\begin{equation}
x'_{hr} = A_h(t_r)x_{hr} .
\label{eq:ori_diffeq}
\end{equation}
Consider a corresponding system of differential equations
\begin{align}
  \frac{\d{\hat x_{hr}(t)}}{\d{t}} = A_h(t_r) \hat x(t) + b
  \label{eq:corr_diffeq}
\end{align}
where $\hat x_{hr}(t) \in \C^d$ for all $t \in [-1,1]$. The solution of this system with $b=0$ and initial condition $\hat x_{hr}(1) = 0$ is clearly $\hat x_{hr}(t)=0$ for all $t \in [-1,1]$. Then the $n$th-order truncated Chebyshev approximation of \eq{corr_diffeq}, which should satisfy the linear system \eq{ori_diffeq} by \eq{cheby_expand} and \eq{cheby_ode}, is exactly $x_{hr}$. Using \lem{error} and observing that $\hat{x}^{(n+1)}(t)=0$, we have
\begin{equation}
x_{hr}=\hat x_{hr}(t)=0.
\end{equation}

When $t=t_l$, we let $|B\rangle=|b_{hl}\rangle$ denote the first nonzero vector. Combining only the equations from \eq{equiv_system} with $r = l$ gives the system
\begin{equation}
x'_{hl} = A_h(t_l)x_{hl} .
\label{eq:ori_diffeq_l}
\end{equation}
Consider a corresponding system of differential equations
\begin{align}
  \frac{\d{\hat x_{hr}(t)}}{\d{t}} = A_h(t_r) \hat x(t) + b,
  \label{eq:corr_diffeq_l}
\end{align}
with $\gamma=b_{h0},b=0$ for $l=0$; or $\gamma=0,b=b_{hl}$ for $l\in\range{n}$.

Using the diagonalization $A_h(t_l)=V(t_l)\Lambda_h(t_l)V^{-1}(t_l)$, we have $e^A=V(t_l)e^{\Lambda_h(t_l)}V^{-1}(t_l)$. Thus the exact solution of the differential equations \eq{corr_diffeq} with $r=l$ and initial condition $\hat x_{hr}(1)=\gamma$ is
\begin{equation}
\begin{aligned}
\hat{x}_{hr}(t)
&=e^{A_h(t_l)(1-t)}\gamma+(e^{A_h(t_l)(1-t)}-I)A_h(t_l)^{-1}b\\
&=V(t_l)e^{\Lambda_h(t_l)(1-t)}V^{-1}(t_l)\gamma+V(e^{\Lambda_h(t_l)(1-t)}-I)\Lambda_h(t_l)^{-1}V^{-1}b.
\end{aligned}
\label{eq:exact_diffeq_soln}
\end{equation}

According to equation \eq{nm_relation} in the proof of \lem{error}, we have
\begin{equation}
x_{hl}=\hat{x}_{hl}(-1)+\delta_{hl}
\label{eq:approx_error}
\end{equation}
where
\begin{equation}
\|\delta_{hl}\|\le\max_{t\in[0,T]}\|\hat{x}_{hl}^{(n+1)}(t)\|\frac{e^{n+1}}{(2n)^n}\le\frac{e\|\gamma\|}{m+1}.
\end{equation}

Now for $h \in \rangez{m+1}$, we take $x_{hl}$ to be the initial condition $\gamma$ for the next subinterval to obtain $x_{(h+1)l}$. Using \eq{exact_diffeq_soln} and \eq{approx_error}, starting from $\gamma=b_{h0},b=0$ for $l=0$, we find
\begin{equation}
x_{ml}
=V(t_l) \biggl(\prod_{j=1}^{m-h+1}e^{2\Lambda_h(t_l)}\biggr)V^{-1}(t_l)\gamma
+\sum_{k=0}^{m-h}V(t_l)\biggl(\prod_{j=1}^k e^{2\Lambda_h(t_l)}\biggr)V^{-1}(t_l)\delta_{(m-k)l}.
\end{equation}
Since $\|\Lambda_h(t_l)\|\le\|\Lambda\|\le1$ and $\Lambda_h(t_l)=\diag(\lambda_0,\ldots,\lambda_{d-1})$ with $\Re(\lambda_i)\le0$ for $i \in \rangez{d}$, we have $\|e^{2\Lambda_h(t_l)}\|\le1$. Therefore
\begin{align}
\|x_{hl}\|
&\le\|x_{ml}\|
\le \kappa_V(t_l)\|b_{hl}\|+(m-h+1)\kappa_V(t_l)\|\delta_{hl}\|
\le\kappa_V(t_l)+e\|\gamma\|
\le\kappa_V+e\|\gamma\|.
\end{align}
On the other hand, with $\gamma=0, b=b_{hl}$ for $l\in\range{n}$, we have
\begin{equation}
  \begin{aligned}
x_{ml}
&=V(t_l)\biggl(\prod_{j=1}^{m-h}e^{2\Lambda_h(t_l)}\biggr)\bigl((e^{2\Lambda_h(t_l)}-I)\Lambda_h(t_l)^{-1}\bigr)V^{-1}(t_l)b \\
&\quad+\sum_{k=0}^{m-h}V(t_l)\biggl(\prod_{j=1}^ke^{2\Lambda_h(t_l)}\biggr)V^{-1}(t_l)\delta_{(m-k)l},
  \end{aligned}
\end{equation}
so
\begin{equation}
\|x_{hl}\|
\le2\kappa_V(t_l)\|b_{hl}\|+(m-h+1)\kappa_V(t_l)\|\delta_{hl}\|
\le2\kappa_V(t_l)+e\|\gamma\|
\le2\kappa_V+e\|\gamma\|.
\end{equation}
For $h \in \{m,m+1,\ldots,m+p\}$, according to the definition of $L_4$ and $L_5$, we similarly have
\begin{equation}
\|x_{hl}\|=\|x_{ml}\|\le2\kappa_V+e\|\gamma\|. 
\end{equation}

According to \eq{exact_diffeq_soln}, $\hat{x}_{hl}(t)$ is a monotonic function of $t \in [-1,1]$, which implies
\begin{equation}
\|\hat{x}_{hl}(t)\|^2\le\max\{\|\hat{x}_{hl}(-1)\|^2,\|\hat{x}_{hl}(1)\|^2\}\le(2\kappa_V+e\|\gamma\|)^2.
\end{equation}
Using the identity
\begin{equation}
\int_{-1}^1\frac{\d{t}}{\sqrt{1-t^2}}=\pi,
\label{eq:weighted_integration}
\end{equation}
we have
\begin{equation}
\int_{-1}^1\|\hat{x}_{hl}(t)\|^2\frac{\d{t}}{\sqrt{1-t^2}} \le (2\kappa_V+e\|\gamma\|)^2\int_{-1}^1\frac{\d{t}}{\sqrt{1-t^2}}=\pi(2\kappa_V+e\|\gamma\|)^2.
\label{eq:weighted_inequality}
\end{equation}
Consider the Chebyshev expansion of $\hat{x}_{hl}(t)$ as in \eq{cheby_expand}:
\begin{equation}
\hat{x}_{hl}(t)=\sum_{i=0}^{d-1}\sum_{l=0}^{\infty}c_{i,l}(T_{h+1})T_l(\overline{t}).
\end{equation}
By the orthogonality of Chebyshev polynomials (as specified in \eq{cheby_orth}), we have
\begin{equation}
\begin{aligned}
&\int_{-1}^1\|\hat{x}_{hl}(t)\|^2\frac{\d{t}}{\sqrt{1-t^2}}=\int_{-1}^1 \biggl(\sum_{i=0}^{d-1}\sum_{l=0}^{\infty}c_{i,l}(T_{h+1})T_l(\overline{t})\biggr)^2\frac{\d{t}}{\sqrt{1-t^2}}\\
=&\sum_{i=0}^{d-1}\sum_{l=1}^{\infty}c^2_{i,l}(T_{h+1})+2\sum_{i=0}^{d-1}c^2_{i,0}(T_{h+1})\ge\sum_{i=0}^{d-1}\sum_{l=1}^nc^2_{i,l}(T_{h+1})+2\sum_{i=0}^{d-1}c^2_{i,0}(T_{h+1}).
\end{aligned}
\end{equation}
Using \eq{weighted_inequality}, this gives
\begin{equation}
\sum_{i=0}^{d-1}\sum_{l=0}^nc^2_{i,l}(T_{h+1}) \le \int_{-1}^1\hat{x}_{hl}^2(t)\frac{\d{t}}{\sqrt{1-t^2}} \le \pi(2\kappa_V+e\|\gamma\|)^2.
\end{equation}

Now we compute $\Vert|X\rangle\Vert$, summing the contributions from all $c_{i,r}(\Gamma_h)$ and $x_{mr}$, and notice that $c_{i,r}=0$ and $x_{mr}=0$ for all $r\ne l$, giving
\begin{equation}
\begin{aligned}
\||X\rangle\|^2&=\sum_{h=0}^{m-1}\sum_{i=0}^{d-1}c^2_{i,l}(T_{h+1})+(p+1)(x_{ml})^2 \\
&\le \pi m(2\kappa_V+e\|\gamma\|)^2+(p+1)(\kappa_V+e\|\gamma\|)^2 \\
&\le (\pi m+p+1)(2\kappa_V+e\|\gamma\|)^2.
\end{aligned}
\end{equation}

Finally, considering all $h\in\rangez{m+p+1}$ and $l\in\rangez{n+1}$, from \eq{Bvec} we have
\begin{equation}
\||B\rangle\|^2=\sum_{h=0}^{m+p}\sum_{l=0}^n\||b_{hl}\rangle\|^2\le1,
\end{equation}
so
\begin{equation}
\begin{aligned}
\| L^{-1}\|^2
&=\sup_{\||B\rangle\|\le1}\| L^{-1}|B\rangle\|^2
=\sup_{\||B\rangle\|\le1}\sum_{h=0}^{m+p}\sum_{l=0}^n\|L^{-1}|b_{hl}\rangle\|^2\\
&\le(\pi m+p+1)(m+p+1)(n+1)(2\kappa_V+e\|\gamma\|)^2\\
&\le(\pi m+p+1)^2(n+1)(2\kappa_V+e\|\gamma\|)^2,
\end{aligned}
\end{equation}
and therefore
\begin{equation}
\| L^{-1}\|\le(\pi m+p+1)(n+1)^{0.5}(2\kappa_V+e\|\gamma\|).
\label{eq:Linvnorm}
\end{equation}

Finally, combining \eq{Lnorm} and \eq{Linvnorm} gives
\begin{equation}
\kappa_L=\| L\|\| L^{-1}\|\le(\pi m+p+1)(n+1)^{3.5}(2\kappa_V+e\|\gamma\|)
\end{equation}
as claimed.
\end{proof}

\section{Success probability}
\label{sec:success_prob}

We now evaluate the success probability of our approach to the quantum ODE problem.

\begin{lemma}\label{lem:success_prob}
Consider an instance of the quantum ODE problem as defined in \prb{ode} with the exact solution $\hat{x}(t)$ for $t\in [0,T]$, and its corresponding linear system \eq{linear_system} with $m,p\in\Z^+$ and $n$ sufficiently large. When applying the QLSA to this system, the probability of measuring a state proportional to $|x(T)\rangle=\sum_{i=0}^{d-1}x_i|i\rangle$ is
\begin{equation}
P_{\mathrm{measure}} \ge \frac{(p+1)(n+1)}{\pi mq^2+(p+1)(n+1)},
\end{equation}
where $x_i$ is defined in \eq{vectorX}, $\tau$ is defined in \eq{tau_max}, and
\begin{equation}
q:=\max_{t\in [0,T]} \frac{\|\hat{x}(t)\|}{\|x(T)\|}.
\end{equation}
\end{lemma}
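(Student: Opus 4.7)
The plan is to compute $\||X\rangle\|^2$ explicitly from the structure of $|X\rangle$ given in \eq{vectorX} and then lower bound the success probability as
\begin{equation}
P_{\text{measure}} = \frac{(p+1)(n+1)\|x(T)\|^2}{\||X\rangle\|^2},
\end{equation}
since the last $(p+1)$ blocks in $|hil\rangle$-space each contain $n+1$ identical copies of $x_i$, contributing $(p+1)(n+1)\|x(T)\|^2$ to the norm and supporting the desired outcome $|x(T)\rangle$. Thus the whole task reduces to giving a good upper bound on the first-block contribution
\begin{equation}
S := \sum_{h=0}^{m-1}\sum_{i=0}^{d-1}\sum_{l=0}^{n}|c_{i,l}(\Gamma_{h+1})|^2,
\end{equation}
which encodes the Chebyshev coefficients of the approximate solution on each subinterval.

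Next, I would reuse the integral machinery from the proof of \lem{condition_number}. For each fixed $h\in\rangez{m}$, write the Chebyshev expansion of the rescaled approximate solution $x_h(t)$ on $[-1,1]$ in components. The orthogonality of Chebyshev polynomials with respect to $\d t/\sqrt{1-t^2}$ (equation \eq{cheby_orth}) gives, up to harmless constants,
\begin{equation}
\sum_{i=0}^{d-1}\sum_{l=0}^{n}|c_{i,l}(\Gamma_{h+1})|^2 \le \int_{-1}^1 \|x_h(t)\|^2\,\frac{\d t}{\sqrt{1-t^2}}.
\end{equation}
Since $\int_{-1}^1 \d t/\sqrt{1-t^2}=\pi$ by \eq{weighted_integration}, this is at most $\pi\max_{t\in[-1,1]}\|x_h(t)\|^2$. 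Now $x_h$ is just the reparametrization of the approximate solution on $[\Gamma_h,\Gamma_{h+1}]$, which by \lem{error} is within $\epsilon$ of $\hat x$ on that subinterval for $n$ sufficiently large, so $\max_{t\in[-1,1]}\|x_h(t)\| \le (1+o(1))\max_{s\in[0,T]}\|\hat x(s)\| = (1+o(1))\,q\,\|x(T)\|$. Summing over the $m$ subintervals then gives
\begin{equation}
S \le \pi m\,q^2\,\|x(T)\|^2
\end{equation}
(absorbing the $o(1)$ into the constant by choosing $n$ large, as the hypothesis allows).

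Finally, combining the two contributions,
\begin{equation}
\||X\rangle\|^2 \le \pi m q^2\|x(T)\|^2 + (p+1)(n+1)\|x(T)\|^2,
\end{equation}
and taking the ratio yields exactly the claimed bound. The main subtlety I expect is keeping the Chebyshev orthogonality argument clean for the approximate (truncated) solution rather than the exact one: the truncation drops tail coefficients, which is favorable, but the approximation of $\hat x$ by $x$ introduces the need to invoke \lem{error} to convert $\max\|x_h\|$ into a bound in terms of $\max\|\hat x\|$, and hence into $q\|x(T)\|$. Beyond that, the computation is essentially bookkeeping, and the $(p+1)(n+1)$ redundancy built into $|X\rangle$ via $L_4,L_5$ is what allows the final $(p+1)(n+1)$ factor in the numerator to dominate and drive the success probability close to $1$ for large $p$.
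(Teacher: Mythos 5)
Your proposal is correct and follows essentially the same route as the paper: decompose $|X\rangle$ into the $(p+1)(n+1)$ repeated copies of $x(T)$ plus the Chebyshev-coefficient blocks, and bound the latter via Chebyshev orthogonality together with $\int_{-1}^1 \d t/\sqrt{1-t^2}=\pi$ to get $\pi m q^2\|x(T)\|^2$. The only cosmetic difference is that you apply orthogonality to the truncated approximate solution $x_h$ and invoke \lem{error} to pass to $\max_t\|\hat x(t)\|$ (picking up a $(1+o(1))$ you absorb for large $n$), whereas the paper expands the exact solution $\hat x_h$ in its infinite Chebyshev series and drops the tail --- implicitly identifying those coefficients with the pseudospectral ones --- so the two arguments carry the same content and the same level of rigor at that step.
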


\begin{proof}
After solving the linear system \eq{linear_system} using the QLSA, we measure the first and third registers of $|X\rangle$ (as defined in \eq{vectorX}). We decompose this state as
\begin{equation}
|X\rangle=|X_{\mathrm{bad}}\rangle+|X_{\mathrm{good}}\rangle,
\end{equation}
where
\begin{align}
|X_{\mathrm{bad}}\rangle&=\sum_{h=0}^{m-1}\sum_{i=0}^{d-1}\sum_{l=0}^nc_{i,l}(\Gamma_{h+1})|hil\rangle, \\
|X_{\mathrm{good}}\rangle&=\sum_{h=m}^{m+p}\sum_{i=0}^{d-1}\sum_{l=0}^nx_i|hil\rangle.
\end{align}

When the first register is observed in some $h \in \{m,m+1,\ldots,m+p\}$ (no matter what outcome is seen for the third register), we output the second register, which is then in a normalized state proportional to the final state:
\begin{equation}
|X_{\mathrm{measure}}\rangle=\frac{|x(T)\rangle}{\||x(T)\rangle\|},
\end{equation}
with
\begin{equation}
|x(T)\rangle=\sum_{i=0}^{d-1}x_i|i\rangle=\sum_{i=0}^{d-1}\sum_{k=0}^nc_{i,k}T_k(t)|i\rangle.
\end{equation}

Notice that
\begin{equation}
\||x(T)\rangle\|^2=\sum_{i=0}^{d-1}x_i^2
\end{equation}
and
\begin{equation}
\||X_{\mathrm{good}}\rangle\|^2=(p+1)(n+1)\sum_{i=0}^{d-1}x_i^2=(p+1)(n+1)\||x(T)\rangle\|^2.
\end{equation}

Considering the definition of $q$, the contribution from time interval $h$ under the rescaling \eq{rescaling_map}, and the identity \eq{weighted_integration}, we have
\begin{equation}
\begin{aligned}
q^2\|x(T)\|^2
&=\max_{t\in [0,T]}\|\hat{x}(t)\|^2
=\frac{1}{\pi}\int_{-1}^1 \frac{\d{\tau}}{\sqrt{1-\tau^2}}\max_{t\in [0,T]}\|\hat{x}(t)\|^2
\ge\frac{1}{\pi}\int_{-1}^1 \frac{\d{\tau}}{\sqrt{1-\tau^2}}\max_{t\in [\Gamma_h,\Gamma_{h+1}]}\|\hat{x}(t)\|^2\\
&=\frac{1}{\pi}\int_{-1}^1 \frac{\d{\tau}}{\sqrt{1-\tau^2}}\max_{\overline{t}\in [-1,1]}\|\hat{x}_h(\overline{t})\|^2
\ge\frac{1}{\pi}\int_{-1}^1 \|\hat{x}_h(\overline{t})\|^2 \frac{\d{\overline{t}}}{\sqrt{1-\overline{t}^2}},
\end{aligned}
\end{equation}
where $\hat{x}_h(\overline{t})$ is the solution of \eq{spilt_linear_system} with the rescaling in \eq{rescaling_rate}. By the orthogonality of Chebyshev polynomials (as specified in \eq{cheby_orth}),
\begin{equation}
\begin{aligned}
q^2\|x(T)\|^2
&\ge\frac{1}{\pi}\int_{-1}^1 \|\hat{x}_h(\overline{t})\|^2 \frac{\d{\overline{t}}}{\sqrt{1-\overline{t}^2}}
=\frac{1}{\pi}\int_{-1}^1(\sum_{i=0}^{d-1}\sum_{k=0}^{\infty}c_{i,k}(T_{h+1})T_k(\overline{t}))^2 \frac{\d{\overline{t}}}{\sqrt{1-\overline{t}^2}}\\
&=\frac{1}{\pi}(\sum_{i=0}^{d-1}\sum_{k=1}^{\infty}c^2_{i,k}(T_{h+1})+2\sum_{i=0}^{d-1}c^2_{i,0}(T_{h+1}))
\ge\frac{1}{\pi}\sum_{i=0}^{d-1}\sum_{k=0}^nc^2_{i,k}(T_{h+1}).
\end{aligned}
\end{equation}
For all $h\in\rangez{m}$, we have
\begin{equation}
mq^2\|x(T)\|^2\ge\sum_{h=0}^{m-1}\frac{1}{\pi}\sum_{i=0}^{d-1}\sum_{k=0}^nc^2_{i,k}(T_{h+1})
=\frac{1}{\pi}\||X_{\mathrm{bad}}\rangle\|^2,
\end{equation}
and therefore
\begin{equation}
\||X_{\mathrm{good}}\rangle\|^2
=(p+1)(n+1)\|x(T)\|^2
\ge \frac{(p+1)(n+1)}{\pi mq^2}\||X_{\mathrm{bad}}\rangle\|^2.
\end{equation}

Thus we see that the success probability of the measurement satisfies
\begin{equation}
P_{\mathrm{measure}} \ge \frac{(p+1)(n+1)}{\pi mq^2+(p+1)(n+1)}
\end{equation}
as claimed.
\end{proof}

\section{State preparation}
\label{sec:state preparation}

We now describe a procedure for preparing the vector $|B\rangle$ in the linear system \eq{linear_system} (defined in \eq{vectorB} and \eq{B_f_h}) using the given ability to prepare the initial state of the system of differential equations. We also evaluate the complexity of this procedure.

\begin{lemma}\label{lem:preparation}
Consider state preparation oracles acting on a state space with basis vectors $|h\rangle|i\rangle|l\rangle$ for $h\in\rangez{m},i\in\rangez{d},l\in\rangez{n}$, where $m,d,n \in \N$, encoding an initial condition $\gamma \in \C^d$ and function $f_{h}(\cos\frac{l\pi}{n}) \in \C^d$ as in \eq{B_f_h}. Specifically, for any $h\in\rangez{m}$ and $l\in\range{n}$, let $O_x$ be a unitary oracle that maps $|0\rangle|0\rangle|0\rangle$ to a state proportional to $|0\rangle|\gamma\rangle|0\rangle$ and $|h\rangle|\phi\rangle|l\rangle$ to $|h\rangle|\phi\rangle|l\rangle$ for any $|\phi\rangle$ orthogonal to $|0\rangle$; let $O_f(h,l)$ be a unitary that maps $|h\rangle|0\rangle|l\rangle$ to a state proportional to $|h\rangle|f_{h}(\cos\frac{l\pi}{n})\rangle|l\rangle$ and maps $|0\rangle|\phi\rangle|0\rangle$ to $|0\rangle|\phi\rangle|0\rangle$ for and $|\phi\rangle$ orthogonal to $|0\rangle$. Suppose $\|\gamma\|$ and $\|f_{h}(\cos\frac{l\pi}{n})\|$ are known. Then the normalized quantum state
\begin{equation}
  |B\rangle \propto
|0\rangle|\gamma\rangle|0\rangle+\sum_{h=0}^{m-1}\sum_{l=1}^n|h\rangle|f_{h}(\cos\tfrac{l\pi}{n})\rangle|l\rangle
\end{equation}
can be prepared with gate and query complexity $O(mn)$.
\end{lemma}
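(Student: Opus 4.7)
The plan is to decompose the preparation of $|B\rangle$ into two stages: first build a weighted superposition in the first ($h$) and third ($l$) registers whose amplitudes match the norms appearing in $|B\rangle$, then use the given oracles to populate the middle ($i$) register with the actual coefficient vectors. The key observation is that since $\|\gamma\|$ and $\|f_h(\cos(l\pi/n))\|$ are supplied classically by hypothesis, every amplitude of $|B\rangle$ is known in advance; only the ``shapes'' of the $d$-dimensional vectors in the middle register need to come from the quantum oracles.

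For stage one, I would prepare
\begin{equation*}
|\psi\rangle \;\propto\; \|\gamma\|\,|0\rangle|0\rangle|0\rangle \;+\; \sum_{h=0}^{m-1}\sum_{l=1}^n \|f_h(\cos\tfrac{l\pi}{n})\|\,|h\rangle|0\rangle|l\rangle,
\end{equation*}
with the middle register held at $|0\rangle$. This is a superposition over at most $mn+1$ basis states of the joint $(h,l)$ subsystem with explicitly known classical amplitudes, so standard arbitrary-state preparation (for instance, a Grover--Rudolph-style binary tree of controlled single-qubit rotations) produces $|\psi\rangle$ in $O(mn)$ gates.

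For stage two, I would apply $O_x$ once and then, for each $(h,l) \in \rangez{m} \times \range{n}$, apply $O_f(h,l)$ controlled on the first and third registers holding $|h\rangle$ and $|l\rangle$. By hypothesis, $O_x$ fixes any component whose middle register is orthogonal to $|0\rangle$ and converts $|0\rangle|0\rangle|0\rangle$ into a state proportional to $|0\rangle|\gamma\rangle|0\rangle$, so combined with the weight $\|\gamma\|$ installed in stage one this produces the $|0\rangle|\gamma\rangle|0\rangle$ summand of $|B\rangle$ at the correct amplitude. Analogously, each controlled application of $O_f(h,l)$ on its matching branch turns $|h\rangle|0\rangle|l\rangle$ into a state proportional to $|h\rangle|f_h(\cos(l\pi/n))\rangle|l\rangle$, again at exactly the amplitude already set by $\|f_h(\cos(l\pi/n))\|$ in stage one. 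Iterating over the $mn$ pairs costs $O(mn)$ oracle queries plus $O(mn)$ control overhead.

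There is no deep obstacle: correctness follows directly from the oracle specifications, because the controlled applications act on disjoint $(h,l)$-sectors and each oracle acts as identity on branches outside its designated input pattern, so the iterated controlled calls do not interfere. The only care required is bookkeeping between the amplitude weights installed in stage one and the (already normalized) vectors produced by the oracles in stage two; summing the $O(mn)$ gates of stage one with the $O(mn)$ queries and gates of stage two yields the claimed complexity.
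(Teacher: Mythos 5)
Your proposal is correct and follows essentially the same route as the paper's proof: first prepare the weighted superposition over $(h,l)$ with amplitudes proportional to $\|\gamma\|$ and $\|f_h(\cos\frac{l\pi}{n})\|$ via standard state-preparation circuits in $O(mn)$ gates, then apply $O_x$ and $O_f(h,l)$ for each $(h,l)$ using $O(mn)$ queries. The only difference is cosmetic: you make explicit the non-interference of the oracle calls on disjoint $(h,l)$-sectors, which the paper leaves implicit.
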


\begin{proof}
We normalize the components of the state using the coefficients
\begin{equation}
\begin{aligned}
&b_{00}=\frac{\|\gamma\|}{\sqrt{\|\gamma\|^2+\sum_{l=1}^n\| f_h(\cos\frac{l\pi}{n})\|^2}},\\
&b_{hl}=\frac{\| f_h(\cos\frac{l\pi}{n})\|}{\sqrt{\|\gamma\|^2+\sum_{l=1}^n\| f_h(\cos\frac{l\pi}{n})\|^2}}, \quad h\in\rangez{m}, l\in\range{n}
\end{aligned}
\end{equation}
so that
\begin{equation}
  \sum_{h=0}^{m-1}\sum_{l=0}^nb_{hl}^2=1.
\end{equation}

First we perform a unitary transformation mapping
\begin{equation}
|0\rangle|0\rangle|0\rangle \mapsto
b_{00}|0\rangle|0\rangle|0\rangle+b_{01}|0\rangle|0\rangle|1\rangle+\cdots+b_{(m-1)n}|m-1\rangle|0\rangle|n\rangle.
\label{eq:stateprep}
\end{equation}
This can be done in time complexity $O(mn)$ by standard techniques \cite{SBM06}.
Then we perform $O_x$ and $O_f(h,l)$ for all $h\in\rangez{m}, l\in\range{n}$, giving
\begin{equation}
|0\rangle|\gamma\rangle|0\rangle+\sum_{h=0}^{m-1}\sum_{l=1}^n|h\rangle|f_{h}(\cos\tfrac{l\pi}{n})\rangle|l\rangle
\end{equation}
using $O(mn)$ queries.
\end{proof}

\section{Main result}
\label{sec:main}

Having analyzed the solution error, condition number, success probability, and state preparation procedure for our approach, we are now ready to establish the main result.

\begin{theorem}\label{thm:main}
Consider an instance of the quantum ODE problem as defined in \prb{ode}.
Assume $A(t)$ can be diagonalized as $A(t)=V(t)\Lambda(t)V^{-1}(t)$ where $\Lambda(t)=\diag(\lambda_1(t),\ldots,\lambda_d(t))$ with $\Re(\lambda_i(t))\le0$ for each $i \in \rangez{d}$ and $t\in[0,T]$.
Then there exists a quantum algorithm that produces a state $x(T)/\| x(T)\|$ $\epsilon$-close to $\hat{x}(T)/\| \hat{x}(T)\|$ in $l^2$ norm, succeeding with probability $\Omega(1)$, with a flag indicating success, using
\begin{equation}
O\bigl(\kappa_Vs\|A\|Tq \, \poly(\log(\kappa_Vs\|A\|g'T/\epsilon g))\bigr)
\end{equation}
queries to oracles $O_A(h,l)$ (a sparse matrix oracle for $A_h(t_l)$ as defined in \eq{Ah}) and $O_x$ and $O_f(h,l)$ (as defined in \lem{preparation}).
Here $\|A\| := \max_{t \in [0,T]}\|A(t)\|$; $\kappa_V:=\max_t\kappa_V(t)$, where $\kappa_V(t)$ is the condition number of $V(t)$; and
\begin{align}
g&:=\|\hat{x}(T)\|, &
g'&:=\max_{t\in [0,T]} \max_{n \in \N}\|\hat{x}^{(n+1)}(t)\|, &
q&:=\max_{t\in [0,T]} \frac{\|\hat{x}(t)\|}{\|x(T)\|}.
\label{eq:main_thm_parameters}
\end{align}
The gate complexity is larger than the query complexity by a factor of $\poly(\log(\kappa_Vds\|A\|g'T/\epsilon))$.
\end{theorem}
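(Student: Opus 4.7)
The plan is to feed the linear system $L|X\rangle = |B\rangle$ of \sec{linear_system} into the high-precision quantum linear systems algorithm of \cite{CKS15}, prepare the right-hand side $|B\rangle$ via \lem{preparation}, measure the first register to project onto the ``good'' subspace $h \in \{m,\ldots,m+p\}$, and boost the success probability to $\Omega(1)$ via amplitude amplification. The total query cost then factors as
\begin{equation}
O\bigl((\text{amp.\ amplification rounds}) \cdot [(\text{QLSA cost}) + (\text{state preparation cost})]\bigr),
\end{equation}
so the substance of the proof is choosing $m$, $n$, $p$ (and the QLSA's internal precision $\epsilon'$) so that \lem{error}, \lem{condition_number}, and \lem{success_prob} combine to the advertised complexity.

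First I would set $m = \lceil T\|A\|/2\rceil$, so that \eq{tau_max} gives $\tau_h \le 2/\|A\|$ and hence $\|A_h(t)\| \le 1$ uniformly. Next I would pick $n$ as the smallest integer with $m g' (e/2n)^n \le \epsilon g/2$; a Lambert-$W$--type estimate yields $n = O(\log(\kappa_V s\|A\| g' T/(\epsilon g)))$, and \lem{error} then guarantees $\|\hat x(T) - x(T)\| \le (\epsilon/2)\|\hat x(T)\|$, which after normalizing both vectors yields the target $\epsilon$-error on unit vectors by a standard triangle-inequality argument. Without loss of generality I take $\|\gamma\| = 1$ (using the value returned by $O_x$), so the $2\kappa_V + e\|\gamma\|$ factor in \lem{condition_number} reduces to $O(\kappa_V)$.

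The remaining parameter is $p$, which I would set to $p = m$. With this choice, \lem{condition_number} gives $\kappa_L = O(m n^{3.5}\kappa_V) = O(T\|A\|\kappa_V)\cdot\mathrm{polylog}(\cdot)$, while \lem{success_prob} gives $P_{\mathrm{measure}} = \Omega(\min\{1, n/q^2\})$ (using $q \ge 1$), so amplitude amplification needs $O(1 + q/\sqrt{n}) = O(q)$ rounds. The sparsity of $L$ is $s_L = O(s)$ because $L_2(A_h)$ inherits the sparsity of $A$ and the remaining blocks $L_1,L_3,L_4,L_5$ contribute only $O(1)$ nonzeros per row, so a block encoding of $L$ can be assembled from $O_A(h,l)$ in $\mathrm{polylog}(d)$ gates by standard sparse-matrix techniques. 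Running the high-precision QLSA with internal precision $\epsilon' = \Theta(\epsilon\sqrt{n}/q)$ then costs $\widetilde O(s\kappa_L)$ queries per round; together with the $O(mn)$ preparation cost from \lem{preparation}, amplification yields a total of $O(q\cdot s\cdot T\|A\|\cdot\kappa_V\cdot\mathrm{polylog}(\kappa_V s\|A\| g' T/(\epsilon g)))$ oracle queries, which matches the claimed bound.

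The main obstacle is propagating the QLSA output error through the projective measurement and renormalization: because the good subspace carries amplitude $\Omega(\sqrt{n}/q)$ inside $|X\rangle/\||X\rangle\|$, an $\epsilon'$-perturbation of the QLSA's output becomes at most an $O(\epsilon' q/\sqrt{n})$-perturbation of the normalized post-measurement state, so the choice $\epsilon' = \Theta(\epsilon\sqrt{n}/q)$ suffices and enters only logarithmically. A secondary subtlety is that \lem{error}, \lem{condition_number}, and \lem{success_prob} are each stated for ``$n$ sufficiently large,'' but the $\log\log$-level threshold used in the proof of \lem{error} is trivially satisfied by the $n$ chosen above. Finally, the gate complexity exceeds the query complexity by $\mathrm{polylog}(d)$ because the block-encoding assembly, the state preparation unitary of \eq{stateprep}, and the amplitude amplification reflectors are each implementable in $\mathrm{polylog}(d)$ qubits and gates, yielding the advertised gate-complexity bound.
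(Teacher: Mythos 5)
Your proposal follows essentially the same route as the paper's proof: the same choice $m = \Theta(\|A\|T)$, an equivalent choice of $n = \mathrm{polylog}(\cdot)$ driven by \lem{error}, $p = \Theta(m)$, \lem{condition_number} for $\kappa_L = O(\kappa_V (m+p) n^{3.5})$, \lem{success_prob} plus $O(q/\sqrt{n})$ rounds of amplitude amplification, state preparation via \lem{preparation}, and the high-precision QLSA of \cite{CKS15}, so the argument is correct and matches the paper. One minor inaccuracy: the sparsity of $L$ is $O(ns)$ rather than $O(s)$, since $L_1$ and $L_2(A_h)$ contain the dense $(n+1)\times(n+1)$ interpolation matrix $P_n$ (and $P_n D_n$); because $n$ is polylogarithmic this only shifts polylog factors and leaves the stated query bound unchanged.
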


\begin{proof}
We first present the algorithm and then analyze its complexity.

\smallskip\noindent
\textbf{Statement of the algorithm.} First, we choose $m$ to guarantee
\begin{equation}
\frac{\|A\|T}{2m}\le1.
\end{equation}
Then, as in \sec{linear_system}, we divide the interval $[0,T]$ into small subintervals $[0,\Gamma_1],[\Gamma_1,\Gamma_2],\ldots,[\Gamma_{m-1},T]$ with $\Gamma_0=0, \Gamma_m=T$, and define
\begin{align}
\tau&:=\max_{0\le h\le m-1}\{\tau_h\}, &
\tau_h&:=|\Gamma_{h+1}-\Gamma_h|=\frac{T}{m}.
\end{align}

Each subinterval $[\Gamma_h,\Gamma_{h+1}]$ for $h\in\range{m-1}$ is mapped onto $[-1,1]$ with a linear mapping $K_h$ satisfying $K_h(\Gamma_h)=1,K_h(\Gamma_{h+1})=-1$:
\begin{equation}
K_h\colon t \mapsto \overline{t}=1-\frac{2(t-\Gamma_h)}{\Gamma_{h+1}-\Gamma_h}.
\end{equation}
We choose
\begin{equation}
n=\frac{e}{2}\max\biggl\{\biggl\lfloor\frac{\log(\Omega)}{\log(\log(\Omega))}\biggr\rfloor,\biggl\lfloor\frac{\log(\omega)}{\log(\log(\omega))}\biggr\rfloor\biggr\}
\end{equation}
where
\begin{equation}
\Omega:=\frac{g'em}{\delta}=\frac{g'em(1+\epsilon)}{g\epsilon}
\label{eq:Omega}
\end{equation}
and
\begin{equation}
\omega:=\frac{g'}{\|\gamma\|}(m+1).
\label{eq:omega}
\end{equation}

Since $\max_{t\in[0,T]}\|\hat{x}^{(n+1)}(t)\|\le g'$ and \lem{error}, this choice guarantees
\begin{equation}
\|\hat{x}(T)-x(T)\|\le m\max_{t\in[0,T]}\|\hat{x}^{(n+1)}(t)\|\frac{e^{n+1}}{(2n)^n}\le\delta
\label{eq:n_inequality_1}
\end{equation}
and
\begin{equation}
\max_{t\in[0,T]}\frac{\|\hat{x}^{(n+1)}(t)\|}{\|\gamma\|}\Bigl(\frac{e}{2n}\Bigr)^n\le\frac{1}{m+1}.
\label{eq:n_inequality_2}
\end{equation}

Now $\|\hat{x}(T)-x(T)\|\le\delta$ implies
\begin{equation}
\biggl\|\frac{\hat{x}(T)}{\|\hat{x}(T)\|}-\frac{x(T)}{\|x(T)\|}\biggr\|
\le\frac{\delta}{\min\{\|\hat{x}(T)\|,\|x(T)\|\}}\le\frac{\delta}{g-\delta}=:\epsilon,
\label{eq:normalized_inequality}
\end{equation}
so we can choose such $n$ to ensure that the normalized output state is $\epsilon$-close to $\hat{x}(T)/\| \hat{x}(T)\|$.

Following \sec{linear_system}, we build the linear system $L|X\rangle=|B\rangle$ (see \eq{linear_system}) that encodes the quantum spectral method.
By \lem{condition_number}, the condition number of this linear system is at most $(\pi m+p+1)(n+1)^{3.5}(2\kappa_V+e\|\gamma\|_{\infty})$. Then we use the QLSA from reference \cite{CKS15} to obtain a normalized state $|X\rangle$ and measure the first and third register of $|X\rangle$ in the standard basis. If the measurement outcome for the first register belongs to
\begin{equation}
S=\{m,m+1,\ldots,m+p\},
\end{equation}
we output the state of the second register, which is a normalized state $| x(T)\rangle/\|| x(T)\rangle\|$ satisfying \eq{normalized_inequality}. By \lem{success_prob}, the probability of this event happening is at least $\frac{(p+1)(n+1)}{\pi mq^2+(p+1)(n+1)}$. To ensure $m+p=O(\|A\|T)$, we can choose
\begin{equation}
p=O(m)=O(\|A\|T),
\end{equation}
so we can achieve success probability $\Omega(1)$ with $O(q/\sqrt{n})$ repetitions of the above procedure.

\smallskip\noindent
\textbf{Analysis of the complexity.} The matrix $L$ is an $(m+p+1)d(n+1)\times(m+p+1)d(n+1)$ matrix with $O(ns)$ nonzero entries in any row or column. By \lem{condition_number} and our choice of parameters, the condition number of $L$ is $O\bigl(\kappa_V(m+p)n^{3.5}\bigr)$. Consequently, by Theorem 5 of \cite{CKS15}, the QLSA produces the state $|x(T)\rangle$ with
\begin{equation}
O\bigl(\kappa_V(m+p)n^{4.5}s\, \poly(\log(\kappa_Vmns/\delta))\bigr)=O\bigl(\kappa_Vs\|A\|T\, \poly(\log(\kappa_Vs\|A\|g'T/\epsilon g))\bigr)
\end{equation}
queries to the oracles $O_A(h,l)$, $O_x$, and $O_f(h,l)$, and its gate complexity is larger by a factor of $\poly(\log(\kappa_Vmnds/\delta))$. Using $O(q/\sqrt{n})$ steps of amplitude amplification to achieve success probability $\Omega(1)$, the overall query complexity of our algorithm is
\begin{equation}
O\bigl(\kappa_V(m+p)n^4sq \, \poly(\log(\kappa_Vmns/\delta))\bigr)=O\bigl(\kappa_Vs\|A\|Tq \, \poly(\log(\kappa_Vs\|A\|g'T/\epsilon g))\bigr),
\end{equation}
and the gate complexity is larger by a factor of
\begin{equation}
\poly(\log(\kappa_Vds\|A\|g'T/\epsilon g))
\end{equation}
as claimed.
\end{proof}

In general, $g'$ could be unbounded above as $n\rightarrow\infty$. However, we could obtain a useful bound in such a case by solving the implicit equations \eq{n_inequality_1} and \eq{n_inequality_2}.

Note that for time-independent differential equations, we can replace $g'$ by $\|\gamma\|+2\tau\|f\|$ as shown in \eq{time_indep_bound}. In place of \eq{Omega} and \eq{omega}, we choose
\begin{equation}
\Omega:=\frac{(\|\gamma\|+2\tau\|f\|)em\kappa_V}{\delta}=\frac{(\|\gamma\|+2\tau\|f\|)em\kappa_V(1+\epsilon)}{g\epsilon}
\end{equation}
and
\begin{equation}
\omega:=\frac{\|\gamma\|+2\tau\|f\|}{\|\gamma\|}(m+1)\kappa_V.
\end{equation}
By \lem{error}, this choice guarantees
\begin{equation}
\|\hat{x}(T)-x(T)\|\le\max_{t\in[-1,1]}\|\hat{x}(t)-x(t)\|\le m\kappa_V(\|\gamma\|+2\tau\|f\|)\frac{e^{n+1}}{(2n)^n}\le\delta
\end{equation}
and
\begin{equation}
\max_{t\in[0,T]}\frac{\|\hat{x}^{(n+1)}(t)\|}{\|\gamma\|}\Bigl(\frac{e}{2n}\Bigr)^n\le\frac{\kappa_V(\|\gamma\|+2\tau\|f\|)}{\|\gamma\|}\Bigl(\frac{e}{2n}\Bigr)^n\le\frac{1}{m+1}.
\end{equation}
Thus we have the following:

\begin{corollary}
For time-independent differential equations, under the same assumptions of \thm{main}, there exists a quantum algorithm using
\begin{equation}
O\bigl(\kappa_Vs\|A\|Tq \, \poly(\log(\kappa_Vs\gamma\|A\|\|f\|T/\epsilon g))\bigr)
\end{equation}
queries to $O_A(h,l)$, $O_x$, and $O_f(h,l)$. The gate complexity of this algorithm is larger than its query complexity by a factor of $\poly(\log(\kappa_Vds\gamma\|A\|\|f\|T/\epsilon))$.
\end{corollary}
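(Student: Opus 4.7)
The approach is to reduce the corollary directly to \thm{main}, using the explicit derivative bound available in the time-independent case to replace the abstract parameter $g'$ with a closed-form expression. The algorithm itself is unchanged: partition $[0,T]$ into $m = \Theta(\|A\|T)$ subintervals so that $\tau \le 2/\|A\|$, build the linear system of \sec{linear_system}, solve it with the high-precision QLSA of \cite{CKS15}, and amplify the success probability with $p = \Theta(\|A\|T)$ copies. Only the quantitative choice of truncation order $n$ changes.

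The key step is to substitute $\max_{t\in[0,T]}\|\hat{x}^{(n+1)}(t)\|\le\kappa_V(\|\gamma\|+2\tau\|f\|)$, which holds uniformly in $n$ by \eq{time_indep_bound}, into the definitions of $\Omega$ and $\omega$ from \thm{main}. This yields exactly the expressions displayed between \thm{main} and the corollary. I would then verify via \lem{error} that the resulting $n=\Theta(\log\Omega/\log\log\Omega)$ satisfies both \eq{n_inequality_1} and \eq{n_inequality_2}; the translation from the unnormalised error bound $\delta$ to the normalised error $\epsilon = \delta/(g-\delta)$ is identical to \eq{normalized_inequality}.

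The final step is pure bookkeeping: transcribe the query and gate counts from \thm{main}, replacing every occurrence of $g'$ in the polylogarithmic argument by $\kappa_V(\|\gamma\|+2\tau\|f\|)$. Since $\tau = T/m = O(1/\|A\|)$, this quantity is $O(\kappa_V(\|\gamma\|+\|f\|/\|A\|))$, polynomial in the displayed parameters. Absorbing these factors inside the logarithm yields the stated polylog arguments $\kappa_V s \gamma \|A\| \|f\| T / \epsilon g$ for the query complexity and $\kappa_V d s \gamma \|A\| \|f\| T / \epsilon$ for the gate complexity.

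There is no genuine obstacle here: all of the analytic work, namely the diagonalisation $A_h = V_h\Lambda_h V_h^{-1}$ with $\Re\lambda_i \le 0$, the derived bound $\|e^{\Lambda_h}\|\le 1$, and the uniform control of $\|\hat{x}^{(n+1)}\|$ via \eq{xhat_derivatives} and the closed-form solution of the autonomous linear ODE, was already performed in the paragraph immediately following \lem{error}. The only care needed is to confirm that the implicit characterisation of $n$ in \thm{main}, which was the one place where general $g'$ could misbehave, becomes an explicit, solvable bound in this time-independent setting; this is immediate from the uniformity of the derivative bound in $n$.
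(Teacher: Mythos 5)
Your proposal is correct and follows essentially the same route as the paper: it substitutes the uniform bound $\max_t\|\hat{x}^{(n+1)}(t)\|\le\kappa_V(\|\gamma\|+2\tau\|f\|)$ from \eq{time_indep_bound} into the definitions of $\Omega$ and $\omega$, checks \eq{n_inequality_1} and \eq{n_inequality_2} via \lem{error}, and then reads off the complexity from \thm{main} with $\tau=O(1/\|A\|)$ absorbed into the polylogarithmic factor. Nothing further is needed.
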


The complexity of our algorithm depends on the parameter $q$ in defined in \eq{main_thm_parameters}, which characterizes the decay of the final state relative to the initial state. As discussed in Section 8 of \cite{BCOW17}, it is unlikely that the dependence on $q$ can be significantly improved, since renormalization of the state effectively implements postselection and an efficient procedure for performing this would have the unlikely consequence $\mathrm{BQP} = \mathrm{PP}$.

We also require the real parts of the eigenvalues of $A(t)$ to be non-positive for all $t\in[0,T]$ so that the solution cannot grow exponentially. This requirement is essentially the same as in the time-independent case considered in \cite{BCOW17} and improves upon the analogous condition in \cite{Ber14} (which requires an additional stability condition). Also as in \cite{BCOW17}, our algorithm can produce approximate solutions for non-diagonalizable $A(t)$, although the dependence on $\epsilon$ degrades to $\poly(1/\epsilon)$. For further discussion of these considerations, see Sections 1 and 8 of \cite{BCOW17}.

\section{Boundary value problems}
\label{sec:BVP}

So far we have focused on initial value problems (IVPs). Boundary value problems (BVPs) are another widely studied class of differential equations that appear in many applications, but that can be harder to solve than IVPs.

Consider a sparse, linear, time-dependent system of differential equations as in \prb{ode} but with a constraint on some linear combination of the initial and final states:

\begin{problem}\label{prb:bvp}
In the \emph{quantum BVP}, we are given a system of equations
\begin{equation}
\frac{\d{x(t)}}{\d{t}}=A(t)x(t)+f(t),
\end{equation}
where $x(t) \in\C^d$, $A(t) \in \C^{d\times d}$ is $s$-sparse, and $f(t) \in \C^d$ for all $t\in[0,T]$, and a boundary condition $\alpha x(0)+\beta x(T)=\gamma$ with $\alpha, \beta, \gamma \in \C^d$. Suppose there exists a unique solution $\hat{x}\in C^{\infty}(0,T)$ of this boundary value problem. Given oracles that compute the locations and values of nonzero entries of $A(t)$ for any $t$, and that prepare quantum states $\alpha |x(0)\rangle+\beta |x(T)\rangle=|\gamma\rangle$ and $|f(t)\rangle$ for any $t$, the goal is to output a quantum state $|x(t^*)\rangle$ that is proportional to $x(t^*)$ for some specified $t^*\in[0,T]$.
\end{problem}

As before, we can rescale $[0,T]$ onto $[-1,1]$ by a linear mapping. However, since we have boundary conditions at $t=0$ and $t=T$, we cannot divide $[0,T]$ into small subintervals. Instead, we directly map $[0,T]$ onto $[-1,1]$ with a linear map $K$ satisfying $K(0)=1$ and $K(T)=-1$:
\begin{equation}
K\colon t \mapsto \overline{t}=1-\frac{2t}{T}.
\end{equation}
Now the new differential equations are
\begin{equation}
\frac{\d{x}}{\d\overline{t}}=-\frac{T}{2}\bigl(A(t)x+f(t)\bigr).
\end{equation}
If we define $A_K(\overline{t}):=-\frac{T}{2}A(t)$ and $f_K(\overline{t})=-\frac{T}{2}f(t)$, we have
\begin{equation}
\frac{\d{x}}{\d\overline{t}}=A_K(\overline{t})x(\overline{t})+f_K(\overline{t})
\end{equation}
for $\overline{t}\in[-1,1]$. Now the boundary condition takes the form
\begin{equation}
\alpha x(1)+\beta x(-1)=\gamma.
\end{equation}

Since we only have one solution interval, we need to choose a larger order $n$ of the Chebyshev series to reduce the solution error. In particular, we take
\begin{equation}
n=\frac{e}{2}\|A\|T\max\biggl\{\biggl\lfloor\frac{\log(\Omega)}{\log(\log(\Omega))}\biggr\rfloor,\biggl\lfloor\frac{\log(\omega)}{\log(\log(\omega))}\biggr\rfloor\biggr\}
\label{eq:bvp_n}
\end{equation}
where $\Omega$ and $\omega$ are the same as \thm{main}.

As in \sec{linear_system}, we approximate $x(t)$ by a finite Chebyshev series with interpolating nodes $\{t_l=\cos\frac{l\pi}{n} : l \in \range{n}\}$ and thereby obtain a linear system
\begin{equation}
\frac{\d{x(t_l)}}{\d{t}}=A_K(t_l)x(t_l)+f(t_l), \quad l\in\range{n}
\end{equation}
with the boundary condition
\begin{equation}
\alpha x(t_0)+\beta x(t_n)=\gamma.
\label{eq:boundary_condition}
\end{equation}

Observe that the linear equations have the same form as in \eq{interpolation_diff}. Instead of \eq{interpolation_init}, the term with $l=0$ encodes the condition \eq{boundary_condition} expanded in a Chebyshev series, namely
\begin{equation}
\alpha_i\sum_{k=0}^nc_{i,k}T_k(t_0)+\beta_i\sum_{k=0}^nc_{i,k}T_k(t_n)=\gamma_i
\end{equation}
for each $i\in\rangez{d}$.
Since $T_k(t_0)=1$ and $T_k(t_n)=(-1)^k$, this can be simplified as
\begin{equation}
\sum_{k=0}^n(\alpha_i+(-1)^k\beta_i)c_{i,k}=\gamma_i.
\end{equation}
If $\alpha_i+(-1)^k\beta_i=0$, the element of $|il\rangle\langle ik|$ of $L_2(A_K)$ is zero; if $\alpha_i+(-1)^k\beta_i\ne0$, without loss of generality, the two sides of this equality can be divided by $\alpha_i+(-1)^k\beta_i$ to guarantee that the terms with $l=0$ can be encoded as in \eq{interpolation_init}.

Now this system can be written in the form of equation \eq{linear_system} with $m=1$. Here
$L$, $|X\rangle$, and $|B\rangle$ are the same as in \eq{matrixL}, \eq{vectorX}, and \eq{vectorB}, respectively, with $m=1$, except for adjustments to $L_3$ that we now describe.

The matrix $L_3$ represents the linear combination $x_i(t^*)=\sum_{k=0}^nc_{i,k}T_k(t^*)$. Thus we take
\begin{equation}
L_3=\sum_{i=0}^{d-1}\sum_{k=0}^nT_k(t^*)|i0\rangle\langle ik|.
\end{equation}
Since $|T_k(t^*)|\le1$, we have
\begin{equation}
\| L_3\| \le n+1,
\end{equation}
and it follows that \lem{error} also holds for boundary value problems. Similarly, \lem{condition_number} still holds with $m=1$.

We are now ready to analyze the complexity of the quantum BVP algorithm. The matrix $L$ defined above is a $(p+2)d(n+1)\times(p+2)d(n+1)$ matrix with $O(ns)$ nonzero entries in any row or column, with condition number $O(\kappa_Vpn^{3.5})$. By \lem{success_prob} with $p=O(1)$, $O(q/\sqrt{n})$ repetitions suffice to ensure success probability $\Omega(1)$. By \eq{bvp_n}, $n$ is linear in $\|A\|T$ and poly-logarithmic in $\Omega$ and $\omega$. Therefore, we have the following:

\begin{theorem}\label{thm:bvp}
Consider an instance of the quantum BVP as defined in \prb{bvp}. Assume $A(t)$ can be diagonalized as $A(t)=V(t)\Lambda(t)V^{-1}(t)$ where $\Lambda(t)=\diag(\lambda_1(t),\ldots,\lambda_d(t))$ with $\Re(\lambda_i(t))\le0$ for each $i \in \rangez{d}$ and $t\in[0,T]$.
Then there exists a quantum algorithm that produces a state $x(t^*)/\| x(t^*)\|$ $\epsilon$-close to $\hat{x}(t^*)/\| \hat{x}(t^*)\|$ in $l^2$ norm, succeeding with probability $\Omega(1)$, with a flag indicating success, using
\begin{equation}
O\bigl(\kappa_Vs\|A\|^4T^4q \, \poly(\log(\kappa_Vs\|A\|g'T/\epsilon g))\bigr)
\end{equation}
queries to $O_A(h,l)$, $O_x$, and $O_f(h,l)$.
Here $\|A\|$, $\kappa_V$, $g$, $g'$ and $q$ are defined as in \thm{main}.
The gate complexity is larger than the query complexity by a factor of $\poly(\log(\kappa_Vds\|A\|g'T/\epsilon))$.
\end{theorem}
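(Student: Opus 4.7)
The plan is to mirror the proof of \thm{main}, adapted to the single-interval setting already constructed in this section. Most of the work has been done: with $m=1$, the modified $L_3$ that evaluates the Chebyshev series at the target point $t^*$, and the enlarged truncation order $n$ from \eq{bvp_n}, both \lem{error} and \lem{condition_number} apply to the BVP linear system (the $L_3$ bound $\|L_3\|\le n+1$ still holds with the modified definition), and \lem{success_prob} applies with $m=1$.

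First I would certify the solution error. Applying \lem{error} with $m=1$ gives
\begin{equation*}
\|\hat{x}(t^*)-x(t^*)\|\le \max_{t\in[0,T]}\|\hat{x}^{(n+1)}(t)\|\,\frac{e^{n+1}}{(2n)^n}.
\end{equation*}
The choice of $n$ in \eq{bvp_n} is exactly what is needed to force the right-hand side to be at most $\delta:=g\epsilon/(1+\epsilon)$, and the same normalization argument as in \eq{normalized_inequality} then yields $\epsilon$-closeness of the output state to $\hat{x}(t^*)/\|\hat{x}(t^*)\|$ in $l^2$ norm.

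Next I would tabulate the QLSA parameters. The matrix $L$ has dimension $(p+2)d(n+1)$ and row sparsity $O(sn)$ (a dense discrete-cosine-transform row combined with an $s$-sparse slice of $A_K$, plus smaller contributions from $L_3$, $L_4$, $L_5$). By \lem{condition_number} with $m=1$ and $p=O(1)$, $\kappa_L = O(\kappa_V n^{3.5})$. By \lem{success_prob} with $m=1$ and $p=O(1)$, the probability of observing a label in $\{1,\ldots,p+1\}$ in the first register is $\Omega(1/(q^2+1))$, so $O(q)$ naive repetitions—reduced to $O(q/\sqrt{n})$ by amplitude amplification—suffice for constant success with a flag. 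Then I would invoke the high-precision QLSA of \cite{CKS15}, whose query cost on an $s_L$-sparse, $\kappa_L$-conditioned system of dimension $N$ is $O(s_L \kappa_L\,\poly(\log(s_L \kappa_L N/\delta)))$. Substituting $s_L=O(sn)$ and $\kappa_L=O(\kappa_V n^{3.5})$ gives $O(\kappa_V s n^{4.5}\,\poly(\log(\kappa_V s n d/\delta)))$ per solve; the amplification factor $O(q/\sqrt n)$ yields $O(\kappa_V s n^4 q\,\poly(\log(\cdot)))$ overall. Substituting $n=O(\|A\|T\,\poly(\log(\Omega\omega)))$ from \eq{bvp_n} turns $n^4$ into $O(\|A\|^4 T^4 \poly(\log(\cdot)))$ and produces the stated query bound $O(\kappa_V s \|A\|^4 T^4 q\,\poly(\log(\kappa_V s\|A\|g'T/\epsilon g)))$; the extra $d$ inside the log of the gate complexity is the usual overhead from the QLSA gate count.

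The main obstacle, and the source of the $T^4$ factor, is that the BVP cannot be subdivided into short-time pieces the way the IVP was: because the boundary condition couples $x(0)$ with $x(T)$, we are forced to rescale the full interval $[0,T]$ onto $[-1,1]$ in a single step, which inflates the coefficient matrix to $\|A_K\| = O(\|A\|T)$. Consequently $\|\hat{x}^{(n+1)}\|$ can grow like $(\|A\|T)^{n+1}$, and the Chebyshev truncation error $(e/2n)^n\,\|\hat{x}^{(n+1)}\|$ only becomes small once $n = \Omega(\|A\|T)$; combined with the $n^{3.5}$ factor from the condition number, the sparsity factor $n$, and the amplification factor $1/\sqrt{n}$, this is exactly what promotes the linear $\|A\|T$ scaling of \thm{main} into quartic scaling here.
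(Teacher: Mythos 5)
Your proposal is correct and follows essentially the same route as the paper: rescale the whole interval at once (forced by the coupling of $x(0)$ and $x(T)$), enlarge $n$ to $O(\|A\|T\,\poly(\log(\Omega\omega)))$ via \eq{bvp_n}, reuse \lem{error}, \lem{condition_number}, and \lem{success_prob} with $m=1$ and the modified $L_3$, and feed the resulting sparsity $O(sn)$ and condition number $O(\kappa_V n^{3.5})$ into the QLSA with $O(q/\sqrt{n})$ rounds of amplitude amplification, so that $n^4=O(\|A\|^4T^4)$ accounts for the quartic scaling. The only blemish is your intermediate statement of the success probability as $\Omega(1/(q^2+1))$ with ``$O(q)$ naive repetitions,'' which drops the $(n+1)$ boost from \lem{success_prob}; your final count $O(q/\sqrt{n})$ is nonetheless the correct one and matches the paper.
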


As for initial value problems, we can simplify this result in the time-independent case.

\begin{corollary}
For a time-independent boundary value problem, under the same assumptions of \thm{bvp}, there exists a quantum algorithm using

\begin{equation}
O\bigl(\kappa_Vs\|A\|^4T^4q \, \poly(\log(\kappa_Vs\gamma\|A\|\|f\|T/\epsilon g))\bigr)
\end{equation}
queries to $O_A(h,l)$, $O_x$, and $O_f(h,l)$. The gate complexity of this algorithm is larger than its query complexity by a factor of $\poly(\log(\kappa_Vds\gamma\|A\|\|f\|T/\epsilon))$.
\end{corollary}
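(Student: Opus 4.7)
The plan is to mirror the specialization done in the first corollary (which derived the time-independent IVP complexity from \thm{main}), now starting from \thm{bvp} instead. The only input that changes between the time-dependent and time-independent cases is the bound on $\max_t \|\hat{x}^{(n+1)}(t)\|$: in the time-independent case, \eq{time_indep_bound} gives $\max_t\|\hat{x}^{(n+1)}(t)\| \le \kappa_V(\|\gamma\| + 2\tau\|f\|)$, and in the BVP setting there is no subdivision ($m=1$), so $\tau = T$. Consequently, anywhere that $g'$ enters the analysis of \thm{bvp}, I can substitute the explicit quantity $\kappa_V(\|\gamma\|+2T\|f\|)$.

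First, I would set
\[
\Omega := \frac{\kappa_V(\|\gamma\|+2T\|f\|)\,e}{\delta} = \frac{\kappa_V(\|\gamma\|+2T\|f\|)\,e(1+\epsilon)}{g\epsilon}, \qquad \omega := \frac{2\kappa_V(\|\gamma\|+2T\|f\|)}{\|\gamma\|},
\]
exactly paralleling the choices used in the time-independent IVP corollary (with $m=1$). Plugged into \eq{bvp_n}, these yield an $n$ that is linear in $\|A\|T$ and only polylogarithmic in $\kappa_V, \|\gamma\|, \|f\|, T, 1/\epsilon, 1/g$. By \lem{error} with this substitution for $g'$, this guarantees $\|\hat{x}(T)-x(T)\|\le\delta$, so the normalized output state is $\epsilon$-close to $\hat{x}(t^*)/\|\hat{x}(t^*)\|$ by the same argument as in \eq{normalized_inequality}.

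Next, I would invoke the complexity accounting from \thm{bvp} verbatim: \lem{condition_number} (which does not depend on time independence) gives condition number $O(\kappa_V p n^{3.5})$; taking $p=O(1)$, the LCU-based QLSA runs in time linear in condition number and sparsity and polylogarithmic in dimension and inverse error; and amplitude amplification adds an $O(q/\sqrt{n})$ factor. Substituting $n = \Theta(\|A\|T\cdot\poly\log)$ produces the leading $\kappa_V s \|A\|^4 T^4 q$ query dependence, and the gate complexity is larger by the claimed $\poly(\log(\kappa_V d s \gamma \|A\|\|f\|T/\epsilon))$ factor from the QLSA. The one point requiring care — the only real bookkeeping obstacle — is to verify that substituting $g' \mapsto \kappa_V(\|\gamma\|+2T\|f\|)$ inside every logarithm (both in $\Omega,\omega$ and in the QLSA error parameter $\delta = g\epsilon/(1+\epsilon)$) really collapses to the stated polylogarithmic argument $\log(\kappa_V s \gamma \|A\|\|f\|T/\epsilon g)$; no new ideas beyond those already in the time-independent IVP corollary and \thm{bvp} are needed.
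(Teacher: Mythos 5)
Your proposal is correct and follows the same route the paper intends: it specializes \thm{bvp} by substituting the time-independent bound $g' \le \kappa_V(\|\gamma\|+2\tau\|f\|)$ from \eq{time_indep_bound} with $m=1$, $\tau=T$, exactly mirroring the derivation of the time-independent IVP corollary from \thm{main}. The choices of $\Omega$ and $\omega$ and the complexity accounting all match what the paper does.
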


\section{Discussion}
\label{sec:discussion}

In this paper, we presented a quantum algorithm to solve linear, time-dependent ordinary differential equations. Specifically, we showed how to employ a global approximation based on the spectral method as an alternative to the more straightforward finite difference method. Our algorithm handles time-independent differential equations with almost the same complexity as \cite{BCOW17}, but unlike that approach, can also handle time-dependent differential equations. Compared to \cite{Ber14}, our algorithm improves the complexity of solving time-dependent linear differential equations from $\poly(1/\epsilon)$ to $\poly(\log(1/\epsilon))$.

This work raises several natural open problems. First, our algorithm must assume that the solution is smooth. If the solution is in $C^r$, the solution error is $O(\frac{1}{n^{r-2}})$ by \lem{spectral_approx}. Can we improve the complexity to $\poly(\log(1/\epsilon))$ under such weaker smoothness assumptions?

Second, the complexity of our algorithm is logarithmic in the parameter $g'$ defined in \eq{main_thm_parameters}, which characterizes the amount of fluctuation in the solution. However, the query complexity of Hamiltonian simulation is independent of that parameter \cite{PQSV11,BCC13}. Can we develop quantum algorithms for general differential equations with query complexity independent of $g'$?

Third, our algorithm has nearly optimal dependence on $T$, scaling as $O(T \poly(\log T))$. According to the no-fast-forwarding theorem \cite{BAC07}, the complexity must be at least linear in $T$, and indeed linear complexity is achievable for the case of Hamiltonian simulation \cite{Chi08}. Can we handle general differential equations with complexity linear in $T$? Furthermore, can we achieve an optimal tradeoff between $T$ and $\epsilon$ as shown for Hamiltonian simulation in \cite{LC17}?

Finally, can the techniques developed here be applied to give improved quantum algorithms for linear partial differential equations, or even for nonlinear ODEs or PDEs?

\section*{Acknowledgements}

We thank Stephen Jordan and Aaron Ostrander for valuable discussions of quantum algorithms for time-dependent linear differential equations.

This work was supported in part by the Army Research Office (MURI award W911NF-16-1-0349), the Canadian Institute for Advanced Research, the National Science Foundation (grants 1526380 and 1813814), and the U.S.\ Department of Energy, Office of Science, Office of Advanced Scientific Computing Research, Quantum Algorithms Teams and Quantum Testbed Pathfinder programs.


\appendix

\section{Chebyshev polynomials}
\label{app:Chebyshev}

This appendix defines the Chebyshev polynomials and presents some of their properties that are useful for our analysis.

For any $k \in \N$, the Chebyshev polynomial of the first kind can be defined as the function
\begin{equation}
T_k(x)=\cos(k\arccos x),\quad x\in[-1,1].
\end{equation}
It can be shown that this is a polynomial of degree $k$ in $x$. For example, we have
\begin{align}
  T_0(x)&=1, & T_1(x)&=x, & T_2(x)&=2x^2-1, & T_3(x)&=4x^3-3x, & T_4(x)&=8x^4-8x^2+1.
\end{align}

Using the trigonometric addition formula $\cos(k+1)\theta+\cos(k-1)\theta=2\cos\theta\cos k\theta$, we have the recurrence
\begin{equation}
T_{k+1}(x)=2xT_k(x)-T_{k-1}(x)
\end{equation}
(which also provides an alternative definition of the Chebyshev polynomials, starting from the initial conditions $T_0(x)=1$ and $T_1(x)=x$).
We also have the bounds
\begin{equation}
\begin{aligned}
|T_k(x)|&\le1 \text{~for~} |x|\le1, & T_k(\pm1)&=(\pm1)^k.
\end{aligned}
\end{equation}

Chebyshev polynomials are orthogonal polynomials on $[-1,1]$ with the weight function $w(x):=(1-x^2)^{-1/2}$. More concretely, defining an inner product on $L_w^2(-1,1)$ by
\begin{align}
(f,g)_w
&:=\int_{-1}^1 f(x)g(x)\frac{\d{x}}{\sqrt{1-x^2}},
\end{align}
we have
\begin{align}
(T_m,T_n)_w
&=\int_0^{\pi}\cos m\theta\cos n\theta \, \d{\theta} \\
&=\frac{\pi}{2}\sigma_n\delta_{m,n}
\label{eq:cheby_orth}
\end{align}
where
\begin{align}
  \sigma_n:=
  \begin{cases}
  2 & n=0\\
  1 & n\ge1.
\end{cases}
\label{eq:sigmadef}
\end{align}

It is well known from the approximation theorem of Weierstrass that $\{T_k(x) : k \in \N\}$ is complete on the space $L_w^2(-1,1)$. In other words, we have the following:
\begin{lemma}
Any function $u\in L_w^2(-1,1)$ can be expanded by a unique Chebyshev series as
\begin{equation}
u(x)=\sum_{k=0}^{\infty}\hat{c}_kT_k(x)
\end{equation}
where the coefficients are
\begin{equation}
\hat{c}_k=\frac{2}{\pi}(u,T_k)_w.
\end{equation}
\end{lemma}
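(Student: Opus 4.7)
The plan is to reduce this to a standard Hilbert space completeness argument via the substitution $x = \cos\theta$, which transfers the weighted space $L_w^2(-1,1)$ to the unweighted space $L^2(0,\pi)$ and maps the Chebyshev polynomials to pure cosines. Concretely, the map $u(x) \mapsto U(\theta) := u(\cos\theta)$ is a unitary isomorphism between $L_w^2(-1,1)$ (with inner product $(\cdot,\cdot)_w$) and $L^2(0,\pi)$ (with Lebesgue measure), since $(f,g)_w = \int_0^\pi f(\cos\theta)g(\cos\theta)\,\d\theta$ after substitution. Under this isomorphism, $T_k$ is sent to $\cos(k\theta)$, and the orthogonality relation \eq{cheby_orth} becomes the familiar orthogonality of $\{\cos(k\theta)\}_{k\ge0}$ on $[0,\pi]$.

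Next I would establish that $\{T_k\}_{k\ge0}$ is complete in $L_w^2(-1,1)$. By the Weierstrass approximation theorem, polynomials are uniformly dense in $C[-1,1]$, hence dense in $L_w^2(-1,1)$ (since $C[-1,1]$ is dense in $L_w^2(-1,1)$ and the weight $w$ is integrable). Because $\{T_k\}_{k=0}^N$ spans the same space as $\{1,x,\ldots,x^N\}$, polynomials lie in the closed linear span of $\{T_k\}$, and thus the Chebyshev polynomials form a complete orthogonal system. Equivalently, the image system $\{\cos(k\theta)\}$ is complete in $L^2(0,\pi)$, which is the classical statement of Fourier cosine series convergence.

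With completeness and orthogonality in hand, standard Hilbert space theory gives, for any $u\in L_w^2(-1,1)$, a unique expansion
\begin{equation}
u=\sum_{k=0}^\infty \hat{c}_k T_k, \qquad \hat{c}_k=\frac{(u,T_k)_w}{(T_k,T_k)_w},
\end{equation}
convergent in $L_w^2(-1,1)$ norm, with Parseval's identity $\|u\|_w^2 = \sum_k \hat{c}_k^2 (T_k,T_k)_w$ ensuring uniqueness. Substituting $(T_k,T_k)_w = \tfrac{\pi}{2}\sigma_k$ from \eq{cheby_orth} yields the stated coefficient formula $\hat{c}_k = \tfrac{2}{\pi}(u,T_k)_w$ for $k\ge1$, with the usual factor-of-two adjustment for $k=0$ absorbed into the convention on $\sigma_0$.

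The main obstacle is not any single deep step but rather packaging the ingredients cleanly: one must verify that $C[-1,1]$ is dense in $L_w^2(-1,1)$ (which uses integrability of $w$ near the endpoints, where $w$ has integrable singularities) and confirm that the $\cos\theta$ substitution preserves the inner product exactly. Once these are in place, the rest reduces to invoking the abstract theorem that an orthogonal family whose closed linear span is the whole Hilbert space yields unique norm-convergent expansions.
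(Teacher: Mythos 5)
Your argument is correct and follows exactly the route the paper indicates: the paper states this lemma without proof, offering only a one-line appeal to the Weierstrass approximation theorem for completeness of $\{T_k\}$, and your write-up supplies the standard details (the $x=\cos\theta$ isometry onto $L^2(0,\pi)$, density of polynomials in $L_w^2(-1,1)$, and the Hilbert-space expansion with respect to a complete orthogonal family). You are also right to flag that the stated coefficient formula $\hat{c}_k=\tfrac{2}{\pi}(u,T_k)_w$ holds as written only for $k\ge1$; since $(T_0,T_0)_w=\pi$, the $k=0$ coefficient is $\tfrac{1}{\pi}(u,T_0)_w$, a normalization the paper's statement glosses over.
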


For any $N \in \N$, we introduce the orthogonal projection $P_N \colon L_w^2(-1,1)\rightarrow\mathbb{P}_N$ (where $\mathbb{P}_N$ denotes the set of polynomials of degree at most $N$) by
\begin{equation}
P_N u(x)=\sum_{k=0}^N\hat{c}_kT_k(x).
\end{equation}
By the completeness of the Chebyshev polynomials, we have
\begin{equation}
(P_Nu(x),v(x))_w=(u(x),v(x))_w\quad\forall\, v\in\mathbb{P}_N
\end{equation}
and
\begin{equation}
\|P_Nu(x)-u(x)\|_w\rightarrow0,\quad N\rightarrow\infty.
\end{equation}

Finally, we compute the Chebyshev series of $u'(x)$ in terms of the Chebyshev series of $u(x)$. Since $T_k(x)=\cos k\theta$ where $\theta=\arccos x$, we have
\begin{equation}
T'_k(x)=\frac{k\sin k\theta}{\sin\theta}.
\end{equation}
Since
\begin{equation}
2\cos k\theta=\frac{\sin(k+1)\theta}{\sin\theta}-\frac{\sin(k-1)\theta}{\sin\theta},
\end{equation}
we obtain
\begin{equation}
2T_k(x)=\frac{T'_{k+1}(x)}{k+1}-\frac{T'_{k-1}(x)}{k-1},\quad k\ge2
\end{equation}
and
\begin{equation}
T_1(x)=\frac{T'_2(x)}{4}.
\end{equation}

Since $P_N u(x) \in \mathbb{P}_N$, the derivative of this projection should be in $\mathbb{P}_{N-1}$. Indeed, we have
\begin{equation}
\begin{aligned}
u'(x)
&=\sum_{k=0}^{N-1}\hat{c}'_kT_k(x) \\
&=\frac{1}{2}\sum_{k=1}^{N-1}\hat{c}'_k\frac{T'_{k+1}(x)}{k+1}
 -\frac{1}{2}\sum_{k=2}^{N-1}\hat{c}'_k\frac{T'_{k-1}(x)}{k-1}
 +\hat{c}'_0T_0(x)\\
&=\sum_{k=2}^{N-2}(\hat{c}'_{k-1}-\hat{c}'_{k+1})\frac{T'_k(x)}{2k}-\frac{1}{2}\hat{c}'_{2}T'_1(x)+\frac{1}{2}\hat{c}'_{N-2}\frac{T'_{N-1}(x)}{N-1}+\frac{1}{2}\hat{c}'_{N-1}\frac{T'_n(x)}{N}+\hat{c}'_0T_0(x)\\
&=\sum_{k=1}^N\hat{c}_kT'_k(x).
\end{aligned}
\end{equation}
Comparing the coefficients of both sides, we find
\begin{equation}
\begin{aligned}
\sigma_k\hat{c}'_k&=\hat{c}'_{k+2}+2(k+1)\hat{c}_{k+1},\quad k\in\rangez{n}\\
\hat{c}'_N&=0 \\
\hat{c}'_{N+1}&=0
\end{aligned}
\end{equation}
where $\sigma_k$ is defined in \eq{sigmadef}.

Since $\hat{c}'_k=0$ for $k\ge N$, we can calculate $\hat{c}'_{N-1}$ from $\hat{c}_N$ and then successively calculate $\hat{c}'_{N-2},\ldots,\hat{c}'_1,\hat{c}'_0$. This recurrence gives
\begin{equation}
\hat{c}'_k=\frac{2}{\sigma_k}\sum_{\substack{j=k+1 \\ \text{$j+k$ odd}}}^Nj\hat{c}_j,\quad  k\in\rangez{n}.
\end{equation}
Since $\hat{c}'_k$ only depends on $\hat{c}_j$ for $j>k$, the transformation matrix $D_N$ between the values $\hat{c}'_k$ and $\hat{c}_k$ for $k \in \rangez{n+1}$ is an upper triangular matrix with all zero diagonal elements, namely
\begin{equation}
[D_N]_{kj}=
\begin{cases}
\frac{2j}{\sigma_k} &j>k,~\text{$j+k$ odd} \\
0 & \text{otherwise}.
\end{cases}
\end{equation}

\section{An example of the quantum spectral method}
\label{app:example}

\sec{linear_system} defines a linear system that implements the quantum spectral method for solving a system of $d$ time-dependent differential equations. Here we present a simple example of this system for the case $d=1$, namely
\begin{equation}
\frac{\d{x}}{\d{t}}=A(t)x(t)+f(t)
\end{equation}
where $x(t),A(t),f(t)\in\C$, $t\in[0,T]$, and we have the initial condition
\begin{equation}
x(0)=\gamma \in \C.
\end{equation}

In particular, we choose $m=3$, $n=2$, and $p=1$ in the specification of the linear system. We divide $[0,T]$ into $m=3$ intervals $[0,\Gamma_1],[\Gamma_1,\Gamma_2],[\Gamma_2,T]$ with $\Gamma_0=0, \Gamma_m=T$, and map each one onto $[-1,1]$ with the linear mapping $K_h$ satisfying $K_h(\Gamma_h)=1$ and $K_h(\Gamma_{h+1})=-1$. Then we take the finite Chebyshev series of $x(t)$ with $n=2$ into the differential equation with interpolating nodes $\{t_l=\cos\frac{l\pi}{n} : l \in \range{2}\} = \{0,-1\}$ to obtain a linear system. Finally, we repeat the final state $p=1$ time to increase the success probability.

With these choices, the linear system has the form
\begin{equation}
L=
  \begin{pmatrix}
    L_1+L_2(A_0) &  &  &  &  \\
    L_3 & L_1+L_2(A_1) &  &  &  \\
     & L_3 & L_1+L_2(A_2) &  &  \\
     &  & L_3 & L_4 &  \\
     &  &  & L_5 & L_4 \\
  \end{pmatrix}
\end{equation}
with
\begin{align}
L_1&=|0\rangle\langle 0|P_n+\sum_{l=1}^n|l\rangle\langle l|P_nD_n=
  \begin{pmatrix}
    1 & 1 & 1 \\
    0 & 1 & 0 \\
    0 & 1 & -4 \\
  \end{pmatrix} \\
L_2(A_h)&=-\sum_{l=1}^nA_h(t_l)\otimes|l\rangle\langle l|P_n=-
  \begin{pmatrix}
    0 & 0 & 0 \\
    A_h(0) & 0 & -A_h(0) \\
    A_h(-1) & -A_h(-1) & A_h(-1) \\
  \end{pmatrix} \\
L_3&=\sum_{i=0}^{d}\sum_{k=0}^n(-1)^k|i0\rangle\langle ik|=
  \begin{pmatrix}
    1 & -1 & 1 \\
    0 & 0 & 0 \\
    0 & 0 & 0 \\
  \end{pmatrix} \\
L_4&=-\sum_{i=0}^{d}\sum_{l=1}^n|il\rangle\langle il-1|+\sum_{i=0}^{d}\sum_{l=0}^n|il\rangle\langle il|=
  \begin{pmatrix}
    1 & 0 & 0 \\
    -1 & 1 & 0 \\
    0 & -1 & 1 \\
  \end{pmatrix} \\
L_5&=-\sum_{i=0}^{d}|i0\rangle\langle in|=
  \begin{pmatrix}
    0 & 0 & -1 \\
    0 & 0 & 0 \\
    0 & 0 & 0 \\
  \end{pmatrix}.
\end{align}

The vector $|X\rangle$ has the form
\begin{equation}
|X\rangle=
  \begin{pmatrix}
    c_0(\Gamma_1) \\
    c_1(\Gamma_1) \\
    c_2(\Gamma_1) \\
    c_0(\Gamma_2) \\
    c_1(\Gamma_2) \\
    c_2(\Gamma_2) \\
    c_0(\Gamma_3) \\
    c_1(\Gamma_3) \\
    c_2(\Gamma_3) \\
    x \\
    x \\
    x \\
    x \\
    x \\
    x \\
  \end{pmatrix}
\end{equation}
where $c_l(\Gamma_{h+1})$ are the Chebyshev series coefficients of $x(\Gamma_{h+1})$ and $x$ is the final state $x(\Gamma_m)=x(-1)$.

Finally, the vector $|B\rangle$ has the form
\begin{equation}
|B\rangle=
  \begin{pmatrix}
    \gamma \\
    f_0(0) \\
    f_0(-1) \\
    0 \\
    f_1(0) \\
    f_1(-1) \\
    0 \\
    f_2(0) \\
    f_2(-1) \\
    0 \\
    0 \\
    0 \\
    0 \\
    0 \\
    0 \\
  \end{pmatrix}
\end{equation}
where $\gamma$ comes from the initial condition and $f_{h}(\cos\frac{l\pi}{n})$ is the value of $f_h$ at the interpolation point $t_l=\cos\frac{l\pi}{n} \in \{0,-1\}$.

\end{document}